\begin{document}

\title{Operator based approach to $\mathcal{PT}$-symmetric problems on a wedge-shaped contour
}

\titlerunning{$\mathcal{PT}$-symmetric problems on a wedge-shaped contour}

\author{ Florian Leben      \and
         Carsten Trunk
}

\authorrunning{F.\ Leben and C.\ Trunk}

\institute{F. Leben \at
              Institut f\"ur  Mathematik,  Technische Universit\"{a}t Ilmenau \\
              Postfach 100565, D-98684 Ilmenau,  Germany\\
              \email{florian.leben@tu-ilmenau.de}
                \and
                 C. Trunk \at
              Institut f\"ur  Mathematik,  Technische Universit\"{a}t Ilmenau \\
              Postfach 100565, D-98684 Ilmenau,  Germany\\
              Tel.: +49 3677 69-3253\\
              \email{carsten.trunk@tu-ilmenau.de}
}

\date{Received: date / Accepted: date}

\maketitle

\begin{abstract}
We consider a second-order differential equation
$$
-y''(z)-(iz)^{N+2}y(z)=\lambda y(z), \quad z\in \Gamma
$$
with an eigenvalue parameter $\lambda \in \mathbb{C}$. In $\mathcal{PT}$ quantum mechanics $z$ runs through a complex contour $\Gamma\subset \mathbb{C}$, which is in general not the real line
nor a real half-line. Via a parametrization we map the problem back to the real line and obtain two differential equations on $[0,\infty)$ and on $(-\infty,0].$ They are coupled in zero by boundary conditions and their potentials are not real-valued.

The main result is a classification of this problem along the
well-known limit-point/ limit-circle scheme for complex potentials
introduced by A.R.\ Sims 60 years ago. Moreover, we associate
operators to the two half-line problems and to the full axis
problem and study their spectra.

\keywords{non-Hermitian Hamiltonian \and Stokes wedges \and limit point \and
limit circle \and $\mathcal{PT}$ symmetric operator \and spectrum \and eigenvalues
}
\end{abstract}

\section{Introduction}
\label{intro}
In classical quantum mechanics Hamiltonians are Hermitian. Recently this has been questioned to be too restrictive. In 1998 C.M.~Bender and S.~Boettcher in the pioneering work \cite{BB98} noticed that a large class of non-Hermitian Hamiltonians possesses real spectra and suggested to construct a non-Hermitian quantum mechanic, see \cite{BB98,BBJ02,BK08,S02} or for an overview \cite{B07,B16,M10}. They adopted all axioms of quantum mechanics except the one that restricted the Hamiltonian to be Hermitian. Instead, one assumes the Hamiltonian to satisfy $\mathcal{PT}$-symmetry.
In \cite{BB98} they consider a non-Hermitian Hamiltonian corresponding to
\begin{align}\label{HAM1}
p^2-(iz)^{N+2},\quad z \in \Gamma
\end{align}
where $N$ is a natural number greater than zero. Contrary to classical quantum mechanics, $z$ runs along a complex contour $\Gamma$. For $N=0$ this Hamiltonian can be considered as a complex deformation of the classical harmonic oscillator.

Hamiltonians of the form \eqref{HAM1} are not Hermitian, but possess an antilinear $\mathcal{PT}$-symmetry, which is the combined invariance under simultaneous spatial reflection $\mathcal{P}$ and time reversal $\mathcal{T}$. The condition that the Hamiltonian is $\mathcal{PT}$-symmetric is a physical condition, because $\mathcal{P}$ and $\mathcal{T}$ both are elements of the homogenous Lorentz group of Lorentz boost and spatial rotation. Nowadays there are a lot of papers in diverse research areas about $\mathcal{PT}$-symmetric Hamiltonians, see \cite{B15,B16,B08,BK08,GMKMRC18,MGCM11,M16,S02,RMGCSK10}. E.g., a close relation to metamaterials was discovered as $\mathcal{PT}$-symmetric operators are capable to incorporate negative permittivity and permeability, cf.\ \cite{GMKMRC18,MGCM11,M16}.

In general one can not expect that the Hamiltonian \eqref{HAM1} is Hermitian in the Hilbert space $L^2$ and has real spectrum. However, in e.g.\ \cite{B07,BB98,BBJ02,DDT}, Hamiltonians with complex potential and real spectra were discussed.

In \eqref{HAM1} the contour $\Gamma$ is located in regions
 of the complex plane, such that the eigenfunctions $\phi:\Gamma\rightarrow \mathbb{C}$ of \eqref{HAM1} vanish exponentially as $|z|\rightarrow \infty$ along $\Gamma$. The regions in the complex plane where the solutions of \eqref{HAM1} vanish exponentially are wedges, which are called \emph{Stokes wedges}. Stokes wedges correspond to sectors in the complex plane. The opening angle and, hence the number of wedges, correspond only to the number $N$,
 for details we refer to Figure \ref{Figg2} below. They are bounded by lines, the
so called \emph{Stokes lines}, cf.~\cite{B07,BB98,BBJ02}. Both, Stokes wedges and Stokes lines are symmetric to the action of $\mathcal{PT}$.

It is our main aim to relate this Stokes wedge/Stokes line
dichotomy to the classical limit point/limit circle classification
from the Sturm-Liouville theory with complex potentials.

For simplicity, we choose here the special contour (cf.~\cite{AT14})
\begin{align*}
\Gamma:=\left\{z=xe^{i \phi sgn(x)}: x \in \mathbb{R}\right\},\quad \phi \in (-\pi/2,\pi/2),
\end{align*}
see Figure \ref{fig:fig_1}, and
\begin{figure}[h]
\begin{center}
\begin{tikzpicture}[domain=-1:4]
\draw[->] (-4.2,0) -- (4.2,0) node[right] {$\mathrm{Re}$}; \draw[->] (0,-0.7)
-- (0,4.2) node[above] {$\mathrm{Im}$};
\foreach \x in {-4,-3,-2,-1,1,2,3,4} \draw (\x,-.1) -- (\x,.1)
node[below=4pt] {$\scriptstyle \x$}; \foreach \y in
{1,2,3,4} \draw (-.1,\y) -- (.1,\y) node[left=4pt]
{$\scriptstyle \y$}; \draw[-,thick] (0,0) -- ( 4, 4)
node[right] {$\Gamma$};  \draw[-,thick] (0, 0) -- (-4 ,4);
\draw [-,thin, opacity =1,domain=0:45] (0,0) -- plot ({2*cos(\x)}, {2*sin(\x)}) -- (0,0) -- cycle node[right=40, above=3pt] {$\phi$};
\end{tikzpicture}\\
\end{center}
\caption{Contour $\Gamma$ in the complex plane with opnening angle $\phi$}
\label{fig:fig_1}
\end{figure}
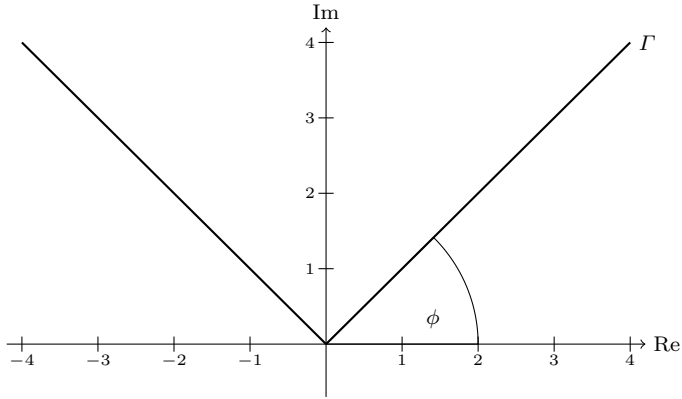
treat this problem via a Sturm-Liouville approach. Namely \eqref{HAM1} leads to the associated eigenvalue equation
\begin{align}\label{HAMEQ}
-y''(z)-(iz)^{N+2}y(z)=\lambda y(z), \quad z \in \Gamma.
\end{align}
Via the parametrization $z(x):=xe^{i \phi sgn(x)},$ $x\in \mathbb{R},$ we obtain Sturm-Liouville differential equations on $[0,\infty)$ and on $(-\infty,0]$, respectively,
\begin{align}\label{SLEE}
\tau_+w(x):=-e^{- 2i \phi}w''(x)-(ix)^{N+2}e^{(N+2)i\phi}w(x)&=\lambda w(x), \, x \in \mathbb{R}_{+}\\ \label{SLEE1}
\tau_-w(x):=-e^{ 2i \phi}w''(x)-(ix)^{N+2}e^{-(N+2)i\phi}w(x)&=\lambda w(x), \, x \in \mathbb{R}_{-}.
\end{align}

It is our aim to treat \eqref{SLEE} and \eqref{SLEE1} from an operator based perspective. This is new compared with the above cited literature from theoretical physics.

Equations \eqref{SLEE} and \eqref{SLEE1} correspond to a Sturm-Liouville problem $-(py')'+qy=\lambda y$ with
\textbf{non-real} $p$ and \textbf{non-real} $q$ on a half-axis.
But, before we consider this case, we recall
the classical Sturm-Liouville theory on a half axis (see \cite{H,Weyl10}) for \textbf{real-valued} coefficients $p$, $q$ and regular end-point $0$. Classical Sturm-Liouville theory
for $p,q$ \textbf{real} follows the following (rough) scheme:
\begin{itemize}
\item[(a)] Determine the number of $L^2$-solutions of $-(py')'+qy=\lambda y$ for $\lambda \in \mathbb{C}  \backslash \mathbb{R}$. According to the famous Weyl alternative we obtain either
one or two linearly independent $L^2$-solutions. The corresponding situation
is then called
 the limit-point case (in case of one solution)
or the limit-circle (two solutions).
\item[(b)] Define minimal and maximal operator corresponding to the differential expression $-(py')'+qy.$ Roughly speaking, the elements in the domain of the minimal vanishes at the endpoint
zero and the elements in the domain of the maximal operator satisfy no boundary conditions.
\item[(c)] Show that the minimal operator is symmetric and its adjoint is the maximal operator.
\item[(d)] Describe all self-adjoint extensions $A_{\theta}$ of the minimal operator  via a suitable parameter $\theta$ and solve the spectral problem $A_{\theta}y=\lambda y.$
\end{itemize}
This scheme is successfully used since the seminal paper of A.~Weyl \cite{Weyl10} and lead to the still very active mathematical research area of extension theory, see, e.g., the
monographs \cite{DS,EE,K,RS,Z}.

An analogous theory was subsequently developed for \textbf{non-real} potentials $q$ by A.R.\ Sims \cite{SIMS}. In a first step,
item (a) was generalized by A.R.\ Sims \cite{SIMS} to $\mathrm{Im}
\, q \leq 0$. It states that there exists at least one solution of \eqref{SLEE} in the weigthed space $L^2(0,\infty,\mathrm{Im}(\lambda-q))$, where  $\mathrm{Im}(\lambda-q)$ is the weight, and this solution is also in $L^2(0,\infty)$ for $\lambda$ in the upper complex plane. Contrary to the above Weyl
alternative in item (a) from above, now there are three cases possible:
\begin{itemize}
\item[1.] Limit-point I: There is (up to a constant) exactly one solution of $-(py')'+qy=\lambda y$ which is
simultaneously in $L^2(0,\infty,\mathrm{Im}(\lambda-q))$ and in $L^2(0,\infty)$.
\item[2.] Limit-point II: There is one solution in $L^2(0,\infty,\mathrm{Im}(\lambda-q))$, but all are in $L^2(0,\infty)$.
\item[3.] Limit-circle: All solutions are simultaneously
in $L^2(0,\infty,\mathrm{Im}(\lambda-q))$ and in $L^2(0,\infty)$.
\end{itemize}
The above approach from A.R.\ Sims \cite{SIMS} is restricted to
potentials $q$ with $\mathrm{Im}\, q \leq 0$. Instead, here
we use a generalisation which allows more general
 potential $q$ and a complex-valued function $p,$ cf.\ \cite{BCEP}. Again one obtains three cases, which corresponds to the above
limit-point I, II and limit-circle cases (and which are called
cases I, II and III in \cite[Theorem 2.1]{BCEP}). We use this
result to give a complete classification into limit-point/limit-circle of the two differential equations \eqref{SLEE} and \eqref{SLEE1}. This is done with the help of asymptotic analysis, cf.~\cite{E}. Depending on the location of the contour $\Gamma$
in terms of its angle, we specify limit-point I, II or limit-circle case.
 In the limit-point I case we do not need boundary conditions at $\pm \infty$, i.e.\ the functions $\phi$ of the domain fulfill $|\phi(x)| \rightarrow 0$ if $|x|\rightarrow \infty$ and if $\phi$ is a solution of \eqref{SLEE} or \eqref{SLEE1} even exponentially. So we reduce the (physical) notion of Stokes wedges and Stokes lines to the limit-point/limit-circle classification in the following way.
 \begin{center}
\begin{tabular}{lcl}
Equations \eqref{SLEE}, \eqref{SLEE1} in limit-point case I  & $\Leftrightarrow$ & $\Gamma$ lies in two Sokes wedges.\\[0.5 ex]
 Limit-point case II is never possible. && \\[0.5 ex]
Equations \eqref{SLEE}, \eqref{SLEE1} in limit-circle case  & $\Leftrightarrow$ & $\Gamma$ lies on two Sokes lines.\\
 \end{tabular}
 \end{center}
 This correspondence between $\mathcal{PT}$ quantum mechanics and
 well-known notion from the Sturm-Liouville theory
 with complex-valued potentials is one of the main findings of this paper.

Moreover, in this paper, we then develop for the non-Hermitian
Hamiltonian \eqref{HAM1} a spectral theory
which takes as a guiding principle the items (b)--(d) from above.
For simplicity, we restrict ourselves to the physically relevant limit-point case I or,
what is the same, to the case when  $\Gamma$ lies in two Sokes wedges
(see \cite{AT,AT12} for some investigations in the limit-circle case).

Similar as in item (b) from above, we characterize the domains of the minimal operator $A_{0\pm}(\tau_{\pm})$ and the maximal operator $A_{max\pm}(\tau_{\pm})$ as
$$\mathrm{dom}\,A_{max\pm}(\tau_{\pm}):=\{w \in
L^2(\mathbb{R}_{\pm}): \tau_{\pm} w \in L^2(\mathbb{R}_{\pm}),
 w, w' \in AC_{loc}(\mathbb{R}_{\pm})\}
$$
and
$$\mathrm{dom}\,A_{0\pm}(\tau_{\pm}):=\{w \in
\mathrm{dom}\,A_{max\pm}(\tau_{\pm}): w(0)=w'(0)=0\}
$$
(in the limit-point case I).
The minimal operator is now $\mathcal{T}$-symmetric (in the literatur $J$-symmetric, that is, symmetric under complex conjugation, see also Section \ref{sec:2} below)
 and its adjoint is the maximal operator, i.e.\ we show
$$
A_{max\pm}(\tau_{\pm})^*=A_{0\pm}(\overline{\tau}_{\pm}).
$$
The maximal/minimal operators $A_{max+}(\tau_{+})$
and $A_{0+}(\tau_{+})$ corresponds to the differential expression
$\tau_+$ on the positive real axis, cf.\ \eqref{SLEE},
whereas $A_{max-}(\tau_{-})$
and $A_{0-}(\tau_{-})$ correspond to $\tau_-$ on
 $\mathbb R_-$, cf.\ \eqref{SLEE1}. However, the problem
 under consideration  is  \eqref{HAMEQ}, which corresponds
 (after parametrization) to the joint problems \eqref{SLEE}
 and \eqref{SLEE1} on the real line with a (so far)
 unspecified boundary condition in zero.

 Hence, we will use the maximal/minimal operators $A_{max\pm}(\tau_{\pm})$ and $A_{0\pm}(\tau_{\pm})$ as the building blocks for
 operators on the full axis.
We define the maximal operator on the full-axis via the direct sum of the maximal operators on the half-axis,
$$
A_{max}=A_{max-}(\tau_{-})\oplus A_{max+}(\tau_{+})
$$
and domain
$$
D_{max}=\left\{w \in L^2(\mathbb{R}):  Aw \in L^2(\mathbb{R}), w \vert _{\mathbb{R_\pm}}, w'\vert _{\mathbb{R_\pm}}  \in AC_{loc}(\mathbb{R_\pm})\right\}.
$$
 Moreover we obtain in the same way the minimal operator
$$
A_0=A_{0-}(\tau_{-})\oplus A_{0+}(\tau_{+})
$$
with domain
$$
\mathrm{dom}\, A_0=\{w\in D_{max}: w(0+)=w(0-)=w'(0+)=w'(0-)=0\}.
$$
It turns out that the operators $A_{max}$ and $A_0$ are adjoint to each other in the new inner product $[\cdot,\cdot]$,
see, e.g., \cite{M05,M06,M10,T06}, where
$[\cdot,\cdot]$ is a new inner product defined via
$$
[\cdot,\cdot]:=(\mathcal{P}\cdot,\cdot).
$$
Here $(\cdot,\cdot)$ stands for the classical $L^2$-inner product.
However, when it comes to the spectrum, both operators, the
maximal $A_{max}$ and the minimal $A_{0}$, are not suitable. Therefore, it is
natural to assume some coupling in zero of the half-axis
operators. This is done by boundary conditions in zero.
From the physical point of view we always assume continuity
in zero, whereas we allow some freedom for the derivative
in zero. Therefore we introduce a parameter
$\alpha$. Finally, we obtain the wanted operator $A$,
\begin{align*}
\mathrm{dom}\,(A):=\left\{w \in D_{max}: w(0+)=w(0-), w'(0+)=\alpha w'(0-) \right\}
\end{align*}
and
\begin{align*}
Aw(x):=\left\{
\begin{array}{cc}
-e^{-2i\phi}w''(x)-(ix)^{N+2}e^{(N+2)i\phi}w(x), & x \geq 0\\
-e^{2i\phi}w''(x)-(ix)^{N+2}e^{-(N+2)i\phi}w(x), & x \leq 0
\end{array}\right.
\end{align*}
We show that the operator $A$ is indeed $\mathcal{PT}$-symmetric and even self-adjoint in the new inner product $[\cdot,\cdot],$ for the right choice of $\alpha$.

In a next step, it is our aim to discuss the spectrum of $A$. For non-self-adjoint operators like $A$ there
is no standard theory to do this. Therefore we use a different
extension of the minimal operator $A_0$ as an aid. For this
we introduce the operator $A_{\pm}$ which are extensions
of the half-axis minimal operators
(or, what is the same, restrictions of the
half-axis maximal operators) with domain
$$
\mathrm{dom}\, A_{\pm}:=\{w \in \mathrm{dom}\,A_{max\pm}(\tau_{\pm}): w(0)=0\}.
$$
 From \cite{BCEP} it is known that the operators $A_{\pm}$ are $\mathcal{T}$-self-adjoint and their spectra consist only of isolated eigenvalues with finite algebraic multiplicity and empty essential spectrum.

Obviously $A$ and  the direct sum of $A_-\oplus A_+$ differ
only by two dimensions.
As a second main result of this note we show that $A$ has the same spectral properties as the direct sum $A_-\oplus A_+$, i.e.\
the spectrum $\sigma(A)$ of $A$ consists only of isolated eigenvalues with finite algebraic multiplicity,
that is, $\sigma(A)=\sigma_p(A)$, the essential spectrum is empty and the resolvent set $\rho(A)$ is non-empty.

Summing up, to some extend it is a surprise that in the physical literature, starting from the seminal paper of C.M.~Bender and S.~Boettcher \cite{BB98}, the above presented techniques from the Sturm-Liouville theory for complex potentials were never exploited. It is the aim of this paper to recall those techniques and, hence, provide a setting of the (nowadays) classical Bender-Boettcher-theory in terms of the spectral extension theory for Sturm-Liouville expressions with a complex potential.

\section{Limit-point/limit-circle and Stokes wedges and lines}
\label{sec:1}
We consider the Hamiltonian
\begin{align*}
H=\frac{1}{2m}p^2-(iz)^{N+2}, \quad z\in \Gamma,
\end{align*}
with a natural number $N>0$, cf.~\cite{B07,BB98} and a wedge-shaped contour, $$\Gamma:=\left\{z=xe^{i \phi sgn(x)}: x \in \mathbb{R}\right\}$$ for some angle $\phi \in (-\pi/2,\pi/2)$, see also \cite{AT14}. We refer to \cite{BT16,M05,M16} where a similiar contour was used. The associated Schr\"odinger eigenvalue problem is
\begin{align}\label{SEE}
-y''(z)-(iz)^{N+2}y(z)=\lambda y(z), \quad z \in \Gamma,
\end{align}
for some complex number $\lambda$.
We map the problem back to the real line via the parametrization
\begin{align}\label{param}
z:\mathbb{R} \rightarrow \mathbb{C}, \quad
z(x):=xe^{i \phi sgn(x)}.
\end{align}
Thus $y$ solves \eqref{SEE} if and only if $w$, $w(x):=y(z(x))$, solves
\begin{align}\label{EVE2}
-e^{\mp 2i \phi}w''(x)-(ix)^{N+2}e^{\pm(N+2)i\phi}w(x)=\lambda w(x), \, x \in \mathbb{R}_{\pm}.
\end{align}
Here and in the following we set $\mathbb{R}_+:=[0,\infty)$ and $\mathbb{R}_-:=(-\infty,0]$. 
For a complex number $z$ with argument $\theta \in (-\pi,\pi]$, we choose as the $n$-th root $z^{1/n}=r^{1/n}e^{i\theta /n}$. In the following theorem we give a classification of this equation into two cases, namely limit-point case and limit-circle case.
\begin{theorem}\label{theorem}
For all $\lambda \in \mathbb{C}$, exactly one of the following holds.
\begin{itemize}
\item[\rm (I)] If $\phi \neq -\frac{N+2}{2N+8}\pi+\frac{2k}{4+N}\pi$, $k=0,\ldots ,N+3$, there exists a, up to a constant, unique solution $w$ of \eqref{EVE2} satisfying $w \in L^2(\mathbb{R}_{\pm})$. In particular there is one solution of \eqref{EVE2} which is not in $L^2(\mathbb{R}_{\pm})$.
\item[\rm (II)] If $\phi = -\frac{N+2}{2N+8}\pi+\frac{2k}{4+N}\pi$, $k=0,\ldots ,N+3$, all solutions $w$ of \eqref{EVE2} satisfy $w \in L^2(\mathbb{R}_{\pm})$.
\end{itemize}
Case {\rm (I)} is called \emph{limit-point} case I and case
{\rm (II)} is called \emph{limit-circle} case.
\end{theorem}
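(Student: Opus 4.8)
The plan is to put the two equations collected in \eqref{EVE2} into normal form $w''=Q(x)w$, read off the leading term of $Q$ at infinity, and apply Liouville--Green (WKB) asymptotics as in \cite{E}; the whole classification then reduces to one elementary condition on $\phi$ which, notably, does not involve $\lambda$.

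I would first treat the ``$+$''-branch, on $\mathbb{R}_+$. Multiplying by $-e^{2i\phi}$ turns \eqref{EVE2} into $w''=Q_+(x)w$ with
$$
Q_+(x) = -(ix)^{N+2}e^{(N+4)i\phi} - \lambda e^{2i\phi}.
$$
For $x>0$ one has $(ix)^{N+2}=e^{i\pi(N+2)/2}x^{N+2}$, so $Q_+(x)=c_+x^{N+2}+O(1)$ as $x\to\infty$, where $c_+:=-e^{i\pi(N+2)/2}e^{(N+4)i\phi}$ has modulus $1$; write $c_+=e^{2i\beta}$ with
$$
\beta := \frac{\pi}{2} + \frac{(N+2)\pi}{4} + \frac{(N+4)\phi}{2}.
$$
Since $Q_+$ is smooth, nonvanishing near $+\infty$ and of genuine power growth, the error-control hypotheses of the Liouville--Green theorem are met (this is the point I will have to check carefully), and \cite{E} then provides two linearly independent solutions
$$
w_{\pm}(x) = Q_+(x)^{-1/4}\bigl(1+o(1)\bigr)\exp\Bigl(\pm\int_{x_0}^{x} Q_+(t)^{1/2}\,dt\Bigr), \qquad x\to\infty .
$$
Fixing the branch of $Q_+^{1/2}$ near $+\infty$ so that $Q_+(t)^{1/2}=e^{i\beta}t^{(N+2)/2}+O(t^{-(N+2)/2})$ (the bounded remainder of $Q_+$, in particular the $\lambda$-term, perturbing the phase integral only by $O(1)$ when $N>0$), and using $|Q_+(x)|^{-1/4}\sim x^{-(N+2)/4}$, one gets
$$
|w_{\pm}(x)| \sim x^{-(N+2)/4}\exp\Bigl(\pm\frac{2\cos\beta}{N+4}\,x^{(N+4)/2}\Bigr), \qquad x\to\infty ;
$$
the other branch merely swaps $w_+$ and $w_-$.

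From this the dichotomy is immediate. If $\cos\beta\neq 0$, exactly one of $w_+,w_-$ decays (faster than any exponential) and hence lies in $L^2(\mathbb{R}_+)$, while the other grows; as $\{w_+,w_-\}$ is a fundamental system, a general solution lies in $L^2(\mathbb{R}_+)$ precisely when it is a scalar multiple of the decaying one --- this is case (I), and the growing solution is the asserted non-$L^2$ solution. If $\cos\beta=0$, then $|w_{\pm}(x)|\sim x^{-(N+2)/4}$, so $|w_{\pm}|^2\sim x^{-(N+2)/2}$, which is integrable at $+\infty$ exactly because $N>0$; then every solution of \eqref{EVE2} lies in $L^2(\mathbb{R}_+)$ --- case (II). The two alternatives are plainly mutually exclusive, and $\beta$ (hence the case) is independent of $\lambda$, which accounts for ``for all $\lambda\in\mathbb{C}$''. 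It remains to translate the condition: $\cos\beta=0\iff\beta\in\tfrac{\pi}{2}+\pi\mathbb{Z}\iff(N+4)\phi\in-\tfrac{(N+2)\pi}{2}+2\pi\mathbb{Z}\iff\phi=-\tfrac{N+2}{2N+8}\pi+\tfrac{2k}{N+4}\pi$ for some $k\in\mathbb{Z}$, and for $\phi\in(-\tfrac{\pi}{2},\tfrac{\pi}{2})$ one checks $-\tfrac12<k<\tfrac{N+3}{2}$, so every such $\phi$ occurs among $k=0,\dots,N+3$. This is exactly the asserted alternative (I)/(II) for the ``$+$''-branch.

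For the ``$-$''-branch on $\mathbb{R}_-$ I would substitute $t=-x$, $v(t):=w(-t)$, obtaining $v''=Q_-(t)v$ with leading coefficient $c_-=-e^{-i\pi(N+2)/2}e^{-(N+4)i\phi}=\overline{c_+}=e^{-2i\beta}$; the phase integral then has real part $\pm\tfrac{2\cos\beta}{N+4}t^{(N+4)/2}+O(1)$, so the identical condition $\cos\beta=0$ governs the classification --- as it must, by the $\mathcal{PT}$-symmetry relating \eqref{SLEE} and \eqref{SLEE1}. The step I expect to require genuine care is verifying that the hypotheses of the asymptotic-integration theorem of \cite{E} really do apply here (smoothness and power growth of $Q_\pm$, a consistent choice of the branch of $Q_\pm^{1/2}$ on a neighbourhood of infinity, and control of the contribution of the $O(1)$ remainder to the exponent); everything else is routine bookkeeping with the angle $\beta$.
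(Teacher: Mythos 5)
Your proof is correct, and its computational core coincides with the paper's: both apply the Liouville--Green asymptotics of \cite[Corollary 2.2.1]{E} to the normal form of \eqref{EVE2} and reduce the classification to the vanishing of $\sin\bigl((N+2)\pi/4+(N+4)\phi/2\bigr)$, which is exactly your condition $\cos\beta=0$ and yields the same set of exceptional angles. The genuine difference lies in how the conclusion is drawn. Having computed $\mathrm{Re}\,(q(t)^{1/2})$ for $\lambda=0$, the paper hands the rest to the classification theorem of \cite[Theorem 2.1 and Remark 2.2]{BCEP}: that result supplies the $\lambda$-independence of the dichotomy and, in the degenerate case, requires verifying the weighted integral condition $\int_0^\infty \mathrm{Re}\,e^{i\eta}\bigl(|w'|^2+(-x^{N+2}-K)|w|^2\bigr)\,dx<\infty$ to land in their case III (limit-circle) rather than their case II. You instead argue everything directly from the asymptotics: in case (I) one fundamental solution decays like $\exp(-cx^{(N+4)/2})$ and the other grows, so only multiples of the former are in $L^2$; in case (II) both solutions are $O(x^{-(N+2)/4})$, hence square-integrable precisely because $N>0$; and the $\lambda$-independence follows since the $\lambda$-term perturbs $Q^{1/2}$ only by $O(x^{-(N+2)/2})$, whose integral converges for $N>0$ --- a point the paper delegates to \cite[Remark 2.2]{BCEP}. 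Your route is more self-contained and fully proves the theorem as stated; what it does not deliver is the finer Sims-type information that the paper extracts from \cite{BCEP}, namely that in case (II) the solutions also satisfy the weighted condition, so that the intermediate ``limit-point case II'' never occurs --- this is recorded in the Remark following the theorem and used later, but is not part of the statement you were asked to prove. Your flagged verification of the error-control hypotheses of \cite[Corollary 2.2.1]{E} is indeed the one step needing care, but the paper cites that corollary at the same level of detail, so no gap is introduced relative to the original.
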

\begin{proof}
We consider equation \eqref{EVE2} on $\mathbb{R}_+$ only. The result for $\mathbb{R}_-$ are obtained by an analogous argument by replacing $x$ by $-x$. This theorem is a special case of \cite[Theorem 2.1]{BCEP}. The two corresponding linear independent solutions $w_1$ and $w_2$ of the Schr\"odinger eigenvalue differential equation $-w''(x)-(ix)^{N+2}e^{(N+4)i\phi}w(x)=\tilde \lambda w(x), \, x \in \mathbb{R}_{+}$, $\tilde \lambda=e^{2i\phi}\lambda$, satisfy \cite[Corollary 2.2.1]{E}
\begin{align}\label{asymptotic}
w_{1,2}(x) \sim q(x)^{-1/4}\mathrm{exp}\left(\pm \int_1^x \mathrm{Re} (q(t)^{1/2})\, dt\right) \text{ for } x \rightarrow \infty
\end{align}
with $q(x):=-(ix)^{N+2}e^{(N+4)i\phi}-\lambda e^{ 2i \phi}$. The notation $f(x) \sim g(x)$ means that $f(x)/g(x) \rightarrow 1$ as $x \rightarrow \infty$.

We compute $\mathrm{Re} (q(t)^{1/2}).$ For $\lambda=0$ we obtain
\begin{align*}
\mathrm{Re} (q(t)^{1/2})& =\mathrm{Re} ((-(ix)^{N+2}e^{(N+4)i\phi})^{1/2})\\ &= \mathrm{Re} ((e^{i\pi +(N+2)i\pi /2+ (N+4)i\phi})^{1/2})x^{(N+2)/2}\\ &= \mathrm{Re} (e^{i\pi /2 +(N+2)i\pi /4+ (N+4)i\phi/2})x^{(N+2)/2}\\ &= -\sin((N+2)\pi /4+ (N+4)\phi/2)x^{(N+2)/2}
\end{align*}
It is easy to see that
\begin{align*}
\sin ((N+2)\pi /4+ (N+4)\phi/2)=0
\end{align*}
if and only if
\begin{align*}
\phi = -\frac{N+2}{2N+8}\pi+\frac{2k}{4+N}\pi, \text{ for } k\in \mathbb{Z}.
\end{align*}
Hence, if $\phi \neq -\frac{N+2}{2N+8}\pi+\frac{2k}{4+N}\pi$ and if $\lambda=0$ then $\mathrm{Re} (q(t)^{1/2}) \neq 0$ and there exists exactly one solution  in $L^2(\mathbb{R}_+)$ or $L^2(\mathbb{R}_-)$, respectively. This implies, see \cite[Theorem 2.1]{BCEP}, that we have case (I), limit-point case I for $\lambda=0$ and with \cite[Remark 2.2]{BCEP} even for all $\lambda \in \mathbb{C}$. This shows (I).

It remains to consider the case $\phi=-\frac{N+2}{2N+8}\pi+\frac{2k}{4+N}\pi$ and $k\in \mathbb{Z}.$ We obtain $$q(x)=-(ix)^{N+2}e^{-(N+2)i\pi/2 +2ki\pi}-\tilde \lambda=-x^{N+2}-\tilde \lambda$$ and the Schr\"odinger eigenvalue equation $$-w''(x)-x^{N+2}w(x)=\tilde \lambda w(x)$$ and we know from \eqref{asymptotic} that both (linearly independent) solutions of \eqref{EVE2} are in $L^2(\mathbb{R}_+)$, because for $\tilde \lambda=0$ we obtain $\mathrm{Re} (q(t)^{1/2})=0$. Therefore from \cite[Theorem 2.1]{BCEP} we have to examine whether
\begin{align}\label{BCEPTHM}
\int_0^{\infty} \mathrm{Re}\, e^{i\eta}\left(|w'(x)|^2+(-x^{N+2}-K)|w(x)|^2\right)\, dx +\int_0^{\infty}|w(x)|^2\, dx <\infty
\end{align}
is for one or both solutions of \eqref{EVE2} fulfilled, where $\eta$ und $K$ are suitable variables, which we explain in the following, in order to decide wether we are in the limit-point case I, II or limit-circle case. In our setting the set $$Q_+:=\mathrm{clconv}\left\{r-x^{N+2}:x\in [0,\infty), 0<r<\infty\right\},$$ where $\mathrm{clconv}$ denotes the closed convex hull, is the real line and $K$ is the number in $Q_+$ with the shortest distance to $\lambda$, hence $K =\mathrm{Re}\, \lambda.$ And $\eta$ corresponds to the angle which rotates $Q_+$ into the right (closed) half plane, such that $\lambda$ is located in the left half plane, hence $\eta=\pm \frac{\pi}{2}.$ So $$\pm\int_0^{\infty} \mathrm{Re}\, i\left(|w'(x)|^2+(-x^{N+2}-\mathrm{Re}\, \lambda)|w(x)|^2\right)\, dx=0.$$ Condition \eqref{BCEPTHM} is fulfilled for both solutions. Thus we are in the limit-circle case (i.e.\ case III in \cite{BCEP}). \qed
\end{proof}

\begin{remark}
In particular limit-point case II (cf.\ Section \ref{intro})
is not possible, which corresponds to case (II) in \cite[Theorem 2.1]{BCEP}.
\end{remark}

\begin{remark}
The limit-point case I, II and limit-circle case correspond to the cases I, II and III from \cite{SIMS} and \cite{BCEP}.
\end{remark}

In the limit-point case there is exactly one solution of \eqref{EVE2} which is in $L^2(\mathbb{R}_+)$ resp.\ $L^2(\mathbb{R}_-)$ and because of the asymptotics \eqref{asymptotic} we even know that this solution goes exponentially to $0$ for $|x| \rightarrow \infty$.
 The regions in the complex plane where
$\Gamma$ fulfills this condition are wedges, see
e.g.\ \cite{BB98,M05,M10}.

We decompose the complex plane according to the angle $\theta = -\frac{N+2}{2N+8}\pi+\frac{2k}{4+N}\pi$ in $N+4$ sectors
\begin{align*}
S_k :&=\left\{z \in \mathbb{C}: -\frac{N+2}{2N+8}\pi+\frac{2k-2}{4+N}\pi<\mathrm{arg}(z)<-\frac{N+2}{2N+8}\pi+\frac{2k}{4+N}\pi \right\},\\ k &= 0,\ldots, N+3.
\end{align*}
The boundary of each $S_k$ consists of two rays $L_k$
\begin{align*}
L_k:=\left\{z \in \mathbb{C}: \mathrm{arg}(z)=-\frac{N+2}{2N+8}\pi+\frac{2k}{4+N}\pi \right\}, \, k=0,\ldots, N+3.
\end{align*}
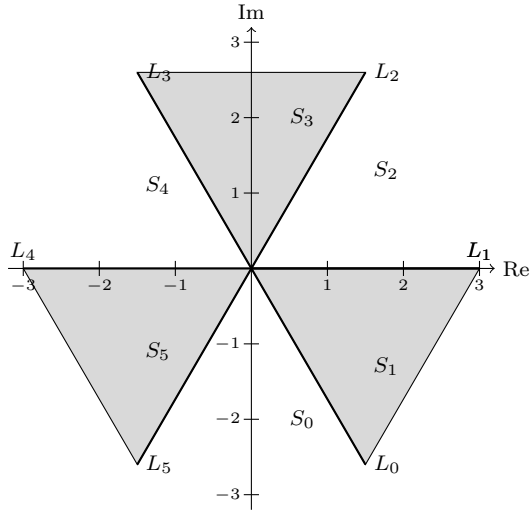
\begin{figure}
\begin{minipage}[h]{\textwidth}
\begin{center}
\begin{tikzpicture}[domain=-3:3]\label{Figg2}
\filldraw[fill opacity=0.15] (0,0) -- (1.5, 2.6) -- (-1.5, 2.6) --cycle;
\filldraw[fill opacity=0.15] (0,0) -- (-3 ,0) -- (-1.5, -2.6) --cycle;
\filldraw[fill opacity=0.15] (0,0) -- (3 ,0) -- ( 1.5, -2.6) --cycle;
\draw[->] (-3.2,0) -- (3.2,0) node[right] {$\mathrm{Re}$}; \draw[->] (0,-3.2)
-- (0,3.2) node[above] {$\mathrm{Im}$};
\foreach \x in {-3,-2,-1,1,2,3} \draw (\x,-.1) -- (\x,.1)
node[below=4pt] {$\scriptstyle \x$}; \foreach \y in
{-3,-2,-1,1,2,3} \draw (-.1,\y) -- (.1,\y) node[left=4pt]
{$\scriptstyle \y$}; \draw[-,thick] (0,0) -- ( -1.5, 2.6)
node[right] {$L_3$};  \draw[-,thick] (0, 0) -- (-3 ,0)node[above] {$L_4$};
\draw[-,thick] (0,0) -- ( -1.5, -2.6)
node[right] {$L_5$};\draw[-,thick] (0,0) -- ( 1.5, 2.6)
node[right] {$L_2$};  \draw[-,thick] (0, 0) -- (3 ,0)node[above] {$L_1$};
\draw[-,thick] (0,0) -- ( 1.5, -2.6)
node[right] {$L_0$};
\draw[-,thick] (0, 0) -- (3 ,0)node[above] {$L_1$};

\draw ( 1.5, -1.3) node[right] {$S_1$};
\draw ( 1.5, 1.3) node[right] {$S_2$};
\draw ( 0.4, 2.0) node[right] {$S_3$};
\draw ( -1.5, 1.1) node[right] {$S_4$};
\draw ( -1.5, -1.1) node[right] {$S_5$};
\draw ( 0.4, -2) node[right] {$S_0$};
\end{tikzpicture}\\
\caption{Stokes lines $L_k$ and Stokes wedges $S_k$ for $N=2$}
\end{center}
\end{minipage}
\end{figure}

In the sectors $S_k,$ $k=0,\ldots, N+3$ one solution of \eqref{EVE2} decays exponentially, wheras on the lines $L_k$ both solutions decay polynomially. The regions $S_k$ are called \emph{Stokes wedges} $S_k$ (see i.e.\ \cite{B07,BB98,BBCJMW06}) and the rays $L_k$ are called \emph{Stokes lines}. Hence we have $N+4$ Stokes lines and Stokes wedges.

By definition, $\Gamma$ is either contained in two Stokes wedges or corresponds to two Stokes lines. This means we can classify our problem depending on the angle $\phi$ of the contour $\Gamma$.

\begin{theorem}\label{prop}
\begin{itemize}
\item[\rm (i)] If $\Gamma$ is located in two Stokes wedges, which are symmetric with respect to the imaginary axis, then \eqref{EVE2} is in the limit-point case for all $\lambda \in \mathbb{C}$, cf.~case (I) in Theorem \ref{theorem}. In particular this implies that only one solution of \eqref{EVE2} is in $L^2(\mathbb{R_+})$ resp.\ $L^2(\mathbb{R_-})$.
\item[\rm (ii)] If $\Gamma$ is located in on Stokes lines, then \eqref{EVE2} is in the limit-circle case for all $\lambda \in \mathbb{C}$, cf.~case (III) in Theorem \ref{theorem}. In particular this implies that all solutions of \eqref{EVE2} are in $L^2(\mathbb{R_+})$ resp.\ $L^2(\mathbb{R_-})$.
\end{itemize}
\end{theorem}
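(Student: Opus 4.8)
The plan is to reduce Theorem~\ref{prop} entirely to Theorem~\ref{theorem} by translating the geometric hypothesis ``$\Gamma$ lies in two Stokes wedges'' (respectively ``on two Stokes lines'') into the arithmetic condition on $\phi$ appearing in Theorem~\ref{theorem}. The first step is to recall the parametrization: the positive half of $\Gamma$ is the ray $\{xe^{i\phi}:x>0\}$, i.e.\ the ray with argument $\phi$, and the negative half is the ray with argument $\phi+\pi$ (since $z(x)=xe^{i\phi\,\mathrm{sgn}(x)}$ gives $z(x)=|x|e^{i(\phi+\pi)}$ for $x<0$, up to the usual normalization of the argument into $(-\pi,\pi]$). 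So ``$\Gamma$ lies in two Stokes wedges'' means precisely that $\phi$ is not equal to the argument of any Stokes line $L_k$, and $\phi+\pi$ is likewise not the argument of any $L_k$.

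The second step is the observation that, because the Stokes lines $L_k$ have arguments $-\frac{N+2}{2N+8}\pi+\frac{2k}{4+N}\pi$ for $k=0,\dots,N+3$, and because consecutive such arguments differ by $\frac{2\pi}{N+4}$, the full collection of these $N+4$ arguments is invariant under rotation by $\pi$. Concretely, adding $\pi$ to the argument of $L_k$ shifts $k$ by $(N+4)/2$ when $N$ is even, and more generally the set $\{-\frac{N+2}{2N+8}\pi+\frac{2k}{4+N}\pi: k\in\mathbb{Z}\}$ is closed under adding $\pi$ (one checks $\pi=\frac{2\cdot((N+4)/2)}{N+4}\pi$ when $N$ even; when $N$ is odd a short direct check shows the same set, taken mod $2\pi$, is still $\pi$-periodic). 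Hence $\phi$ is the argument of some Stokes line if and only if $\phi+\pi$ is, so the two conditions ``$\Gamma_+$ avoids all Stokes lines'' and ``$\Gamma_-$ avoids all Stokes lines'' are equivalent, and both are equivalent to the single condition $\phi\neq -\frac{N+2}{2N+8}\pi+\frac{2k}{4+N}\pi$, $k=0,\dots,N+3$, which is exactly hypothesis~(I) of Theorem~\ref{theorem}. Applying Theorem~\ref{theorem}(I) to both equations in \eqref{EVE2} then gives the limit-point statement of part~(i), including that exactly one solution lies in $L^2(\mathbb{R}_\pm)$.

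For part~(ii), ``$\Gamma$ corresponds to two Stokes lines'' means $\phi$ equals the argument of some $L_k$, hence $\phi=-\frac{N+2}{2N+8}\pi+\frac{2k}{4+N}\pi$ for some $k\in\{0,\dots,N+3\}$, and by the same $\pi$-periodicity the negative ray $\Gamma_-$ then also lies on a Stokes line. This is precisely hypothesis~(II) of Theorem~\ref{theorem}, so Theorem~\ref{theorem}(II) applied to \eqref{EVE2} on $\mathbb{R}_+$ and on $\mathbb{R}_-$ yields that all solutions are in $L^2(\mathbb{R}_\pm)$, which is the limit-circle conclusion. The only mildly delicate point — and the one I would treat carefully rather than wave through — is the $\pi$-periodicity of the set of Stokes-line arguments, i.e.\ verifying that rotating the configuration of $N+4$ equally spaced rays by $\pi$ maps the configuration onto itself; this is where the symmetry of $\Gamma$ with respect to the imaginary axis (the $\mathcal{PT}$-symmetry of the contour) is really used, and it is what guarantees that the half-line problems \eqref{SLEE} and \eqref{SLEE1} are always \emph{simultaneously} in the limit-point case or \emph{simultaneously} in the limit-circle case, never split. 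Everything else is a direct citation of Theorem~\ref{theorem}.
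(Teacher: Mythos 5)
Your overall strategy --- translating the geometric hypothesis into the arithmetic condition on $\phi$ and then citing Theorem~\ref{theorem} --- is exactly what the paper does (the paper gives no separate proof; the statement is an immediate consequence of Theorem~\ref{theorem} and the definitions of $S_k$ and $L_k$). However, the one step you yourself flag as delicate is carried out incorrectly, in two places that happen to compensate for $N$ even but not for $N$ odd. First, the negative half of $\Gamma$ is \emph{not} the ray of argument $\phi+\pi$: since $z(x)=xe^{i\phi\,\mathrm{sgn}(x)}$, for $x<0$ one has $z(x)=xe^{-i\phi}=|x|e^{i(\pi-\phi)}$, so the negative ray has argument $\pi-\phi$; the contour is symmetric with respect to the \emph{imaginary axis} (this is its $\mathcal{PT}$-symmetry), not with respect to the origin. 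Second, the set of Stokes-line directions $\bigl\{\theta_k=-\tfrac{N+2}{2N+8}\pi+\tfrac{2k}{N+4}\pi\bigr\}$ is \emph{not} invariant under rotation by $\pi$ when $N$ is odd: $N+4$ equally spaced rays with spacing $\tfrac{2\pi}{N+4}$ form a $\pi$-periodic configuration only when $\pi$ is a multiple of $\tfrac{2\pi}{N+4}$, i.e.\ only when $N+4$ is even. For $N=1$ the directions are $\{-\tfrac{7\pi}{10},-\tfrac{3\pi}{10},\tfrac{\pi}{10},\tfrac{\pi}{2},\tfrac{9\pi}{10}\}$, and $\tfrac{\pi}{10}+\pi\equiv-\tfrac{9\pi}{10}$ is not in this set, so your ``short direct check'' for odd $N$ fails. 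As written, your argument would then (wrongly) allow the two half-line problems to fall into different cases.

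The correct version of your delicate point is that the Stokes configuration is invariant under the reflection $\theta\mapsto\pi-\theta$: one computes $\pi-\theta_k=\theta_{N+3-k}$ exactly, for every $N$. Combined with the correct argument $\pi-\phi$ of the negative ray, this shows that the positive and negative rays of $\Gamma$ lie simultaneously in wedges or simultaneously on lines, and the single condition $\phi\neq-\tfrac{N+2}{2N+8}\pi+\tfrac{2k}{4+N}\pi$ governs both --- which is consistent with the fact that Theorem~\ref{theorem} already states one condition on $\phi$ covering both signs (in its proof the $\mathbb{R}_-$ case is obtained by the substitution $x\mapsto-x$, which produces the same vanishing condition $\sin\bigl((N+2)\pi/4+(N+4)\phi/2\bigr)=0$ up to an overall sign). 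With these two corrections your reduction to Theorem~\ref{theorem} goes through; without them the key lemma you invoke is false for odd $N$.
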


\section{Maximal and minimal operators on the semi-axis}
\label{sec:2}

From now on we restrict ourselves to the limit-point case,
i.e.\ $\Gamma$ lies in two Stokes wedges and \eqref{EVE2} has exactly one solution which is in $L^2(\mathbb{R}_{\pm})$,
cf.\ Theorem \ref{prop}.
Here we will define three different kinds of operators on
$\mathbb R_+$ and $\mathbb R_-$: The maximal, the minimal
and the preminimal operator. This is motivated by the classical
procedure for Sturm-Liouville expressions in the
limit-point case. In the classical Sturm-Liouville
situation, where the coefficients are real, the minimal
operator is the closure of the preminimal, it is a symmetric
operator in a Hilbert space and its adjoint is the maximal
operator.

Here, the situation is slightly different.  However, the definitions of the corresponding  operators are formally
the same as in the
classical Sturm-Liouville case but due to the complex-valued coefficients the adjoints behave differently.

\begin{definition}
The operator $\mathcal{T}$ defined on the Hilbert space $L^2(I)$, where $I\subset \mathbb{R}$ is an interval, is called \emph{time reverse} operator, if for all $u \in L^2$ we have
$$
\mathcal{T}u(x)=\overline{u}(x).
$$
\end{definition}
We mention that in \cite{EE} $\mathcal{T}$ equals $J$.

We consider the following differential expressions
\begin{align*}
\tau_{\pm} w(x):=-e^{\mp 2i\phi}w''(x)-(ix)^{N+2}e^{\pm(N+2)i\phi}w(x)
\end{align*}
and the formal adjoint
\begin{align}\label{tau_adjungierte}
\tau^+_{\pm} w(x):=-e^{\pm 2i\phi}w''(x)-(-ix)^{N+2}e^{\mp (N+2)i\phi}w(x)
\end{align}
on $\mathbb{R}_{\pm}.$ Obviously
\begin{align}\label{tau_adjungierte22}
\tau^+_{\pm}=\overline{\tau_{\pm}}, \quad \mbox{where } \overline{\tau_{\pm}}=\mathcal{T}\tau_{\pm}\mathcal{T}.
\end{align}

We assume that $\tau_{\pm}$ is in the limit-point case, that is, $\phi \neq -\frac{N+2}{2N+8}\pi+\frac{2k}{N+4}\pi$, cf.\ Theorem \ref{prop}. Observe that then also the following lemma holds.
\begin{lemma}\label{lemma_lpc}
If $\tau_{\pm}$ is in the limit-point case, then $\tau^+_{\pm}=\mathcal{T}\tau_{\pm}\mathcal{T}$ is in the limit-point case.
\end{lemma}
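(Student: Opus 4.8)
The plan is to exploit the explicit characterization of the limit-point/limit-circle dichotomy already obtained in Theorem~\ref{theorem}, together with the relation $\tau^+_{\pm}=\mathcal{T}\tau_{\pm}\mathcal{T}$ from \eqref{tau_adjungierte22}. The key observation is that conjugation by the antilinear operator $\mathcal{T}$ does not change the $L^2$-norm of a function: for $w\in L^2(\mathbb{R}_{\pm})$ we have $\|\mathcal{T}w\|=\|\overline{w}\|=\|w\|$, so $w\in L^2(\mathbb{R}_{\pm})$ if and only if $\overline{w}\in L^2(\mathbb{R}_{\pm})$.

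First I would set up the correspondence between solutions. If $w$ solves $\tau_{\pm}w=\lambda w$, then applying $\mathcal{T}$ and using $\mathcal{T}\tau_{\pm}\mathcal{T}=\tau^+_{\pm}$ gives $\tau^+_{\pm}(\mathcal{T}w)=\mathcal{T}\tau_{\pm}w=\mathcal{T}(\lambda w)=\overline{\lambda}\,\mathcal{T}w$. Hence $\mathcal{T}$ is a (conjugate-linear) bijection from the solution space of $\tau_{\pm}w=\lambda w$ onto the solution space of $\tau^+_{\pm}v=\overline{\lambda}v$, and it preserves linear independence (over $\mathbb{C}$, since $\mathcal{T}$ is additive and conjugate-homogeneous). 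Because $\mathcal{T}$ preserves membership in $L^2(\mathbb{R}_{\pm})$ as noted above, the number of linearly independent $L^2(\mathbb{R}_{\pm})$-solutions of $\tau^+_{\pm}v=\overline{\lambda}v$ equals the number of linearly independent $L^2(\mathbb{R}_{\pm})$-solutions of $\tau_{\pm}w=\lambda w$. By Theorem~\ref{theorem} the latter number is $1$ for every $\lambda\in\mathbb{C}$ precisely when $\phi\neq -\frac{N+2}{2N+8}\pi+\frac{2k}{N+4}\pi$; since this count is independent of the particular $\overline{\lambda}$, the differential expression $\tau^+_{\pm}$ is in the limit-point case I as well.

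An alternative, and perhaps cleaner, route is to observe that $\tau^+_{\pm}$ has exactly the same \emph{form} as $\tau_{\mp}$ but with a reflected angle: comparing \eqref{tau_adjungierte} with \eqref{SLEE}--\eqref{SLEE1}, one sees $\tau^+_{+}$ is of the type covered by Theorem~\ref{theorem} with $N$ unchanged and with $(ix)^{N+2}e^{-(N+2)i\phi}$ replaced appropriately, so the excluded angles for $\tau^+_{\pm}$ are again of the form $-\frac{N+2}{2N+8}\pi+\frac{2k}{N+4}\pi$ modulo the symmetry $k\mapsto k'$; since the hypothesis excludes all such angles, $\tau^+_{\pm}$ falls in case~(I) of Theorem~\ref{theorem}. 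I would present the $\mathcal{T}$-conjugation argument as the main proof since it is coordinate-free and makes transparent why the classification is invariant.

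The only genuine subtlety — and the step I would be most careful about — is making sure that $\mathcal{T}$ indeed maps the relevant solution classes onto each other \emph{exactly}, i.e.\ that no solution is lost or gained: this requires that $\mathcal{T}$ restricts to a bijection on $AC_{loc}(\mathbb{R}_{\pm})$ (clear, since $\overline{w}'=\overline{w'}$ and conjugation preserves local absolute continuity and local square-integrability), and that the limit-point count in Theorem~\ref{theorem} really is "$1$ for \emph{all} $\lambda$" rather than "$1$ for a single $\lambda$". The latter is guaranteed by the constancy of the deficiency-type index across $\lambda$, which is exactly the content built into case~(I) of Theorem~\ref{theorem} (and, at its root, \cite[Remark 2.2]{BCEP}). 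With these observations the lemma follows in a few lines. \qed
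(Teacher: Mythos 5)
Your proof is correct, and your primary argument takes a genuinely different route from the paper's. The paper simply reruns the asymptotic analysis of \cite[Corollary 2.2.1]{E} for the conjugated potential $-(-ix)^{N+2}e^{-(N+4)i\phi}$, obtains $\mathrm{Re}\,(q(t)^{1/2})=+\sin((N+2)\pi/4+(N+4)\phi/2)\,x^{(N+2)/2}$ (the same expression as in Theorem~\ref{theorem} up to sign), and concludes that the excluded angles coincide --- this is essentially your ``alternative route''. Your main argument instead exploits the conjugate-linear bijection $w\mapsto\overline{w}$ between the solution spaces of $(\tau_{\pm}-\lambda)w=0$ and $(\tau^+_{\pm}-\overline{\lambda})v=0$, together with the facts that conjugation preserves $L^2(\mathbb{R}_{\pm})$, $AC_{loc}$ and linear independence, and that the classification is $\lambda$-independent by \cite[Remark 2.2]{BCEP}. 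This is cleaner and strictly more general: it requires no computation, works for any Sturm--Liouville expression with complex coefficients, and makes the invariance of the classification under $\mathcal{T}$-conjugation transparent; what the paper's computation buys in exchange is only the explicit asymptotics of the solutions of the adjoint equation, which are not used elsewhere. One small point worth stating explicitly in your write-up: having exactly one $L^2$-solution (up to constants) rules out both case II and case III of \cite[Theorem 2.1]{BCEP}, since in each of those all solutions lie in $L^2(\mathbb{R}_{\pm})$; hence $\tau^+_{\pm}$ lands precisely in case I, which is the limit-point case I asserted in the lemma.
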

\begin{proof}
As in the proof of Theorem \ref{theorem} we use the asymptotics \eqref{asymptotic} from \cite[Corollary 2.2.1]{E} and calculate for the potential in \eqref{tau_adjungierte} its real part for $x\in \mathbb{R}_+$
\begin{align*}
\mathrm{Re} (q(t)^{1/2})& =\mathrm{Re} ((-(-ix)^{N+2}e^{-(N+4)i\phi})^{1/2})\\ &= \mathrm{Re} ((e^{i\pi -(N+2)i\pi /2- (N+4)i\phi})^{1/2})x^{(N+2)/2}\\ &= \mathrm{Re} (e^{i\pi /2 -(N+2)i\pi /4- (N+4)i\phi/2})x^{(N+2)/2}\\ &= \sin((N+2)\pi /4+ (N+4)\phi/2)x^{(N+2)/2}.
\end{align*}
Hence $$\sin((N+2)\pi /4+ (N+4)\phi/2)x^{(N+2)/2}\neq 0,$$ if and only if $$\phi \neq -\frac{N+2}{2N+8}\pi+\frac{2k}{N+4}\pi, \quad k\in \mathbb{Z},$$ which is exactly the condition for
 $\tau_{+}$ to be in the limit-point case, see Theorems
\ref{theorem} and \ref{prop}. In the same way we obtain the result for $x\in \mathbb{R}_-$.
\qed
\end{proof}

Define the following operators with
\begin{align*}
\mathrm{dom}\,(A'_{0\pm}(\tau_{\pm})):=\left\{w \in L^2(\mathbb{R}_{\pm}): \tau_{\pm} w \in L^2(\mathbb{R}_{\pm}), w, w' \in AC_{loc}(\mathbb{R}_{\pm}),\right.\\ \left. w(0)=w'(0)=0, w \text{ has compact support in } \mathbb{R}_{\pm} \right\}
\end{align*}
\begin{align*}
A'_{0\pm}(\tau_{\pm})w(x):=\tau_{\pm}w(x).
\end{align*}
By $A_{0\pm}(\tau_{\pm})$ we denote the closure of $A'_{0\pm}(\tau_{\pm})$ ($A'_{0\pm}(\tau_{\pm})$ is closable by \cite[III Theorem 10.7]{EE}). The operators $A'_{0\pm}(\tau_{\pm})$
correspond to the preminimal operators in classical Sturm-Liouville theory, whereas $A_{0\pm}(\tau_{\pm})$ correspond
to the minimal operators.

Additionally we define the maximal operators
\begin{align*}
\mathrm{dom}\,(A_{max\pm}(\tau_{\pm})):=\left\{w \in L^2(\mathbb{R}_{\pm}): \tau_{\pm} w \in L^2(\mathbb{R}_{\pm}), w, w' \in AC_{loc}(\mathbb{R}_{\pm})\right\}
\end{align*}
\begin{align*}
A_{max\pm}(\tau_{\pm})w(x):=\tau_{\pm}w(x).
\end{align*}

Recall that for a closed operator $T:\mathrm{dom}\, (T)\subset L^2\rightarrow L^2$ the \emph{deficiency} of $T$ is defined as $\mathrm{def}\, T:=\mathrm{dim}\, L^2 /\mathrm{ran}\, (T)$.
Moreover, we recall that the notion of the set $\Pi(T)$ of \emph{regular points of} $T$  (cf., e.g., \cite[pg.\ 101]{EE}) is
$$
\Pi(T):=\left\{\lambda:\exists \, k(\lambda)>0 \text{ with }\|(A-\lambda)u\| \geq k(\lambda)\|u\|\text{ for all } u\in \mathrm{dom}\,(A)\right\}.
$$

\begin{theorem}\label{Hilfsstory}
We have
\begin{equation}\label{regular}
A_{max\pm}(\tau_{\pm})^*=A_{0\pm}(\tau^+_{\pm}) \text{ and }A_{0\pm}(\tau_{\pm})^*=A_{max\pm}(\tau^+_{\pm}).
\end{equation}
Moreover $\mathcal{T}A_{0\pm}(\tau_{\pm})\mathcal{T} \subset A_{0\pm}(\tau_{\pm})^*$ and $\mathrm{def}\, (A_{0\pm}(\tau_{\pm})-\lambda)=\mathrm{def}\, (A_{0\pm}(\tau^+_{\pm})-\overline{\lambda})$ is either $1$ or $2$ for all $\lambda \in \Pi (A_{0\pm}(\tau_{\pm}))$. In the limit-point case we obtain $\mathrm{def}\, (A_{0\pm}(\tau_{\pm})-\lambda)=1$ and
\begin{equation}\label{d.d}
\mathrm{dim}\, \mathrm{dom}\,A_{max\pm}(\tau_{\pm}) /\mathrm{dom}\,A_{0\pm}(\tau_{\pm})=2.
\end{equation}
Furthermore, in the limit-point case, $\Pi (A_{0\pm}(\tau_{\pm}))\neq \emptyset$ and with
\begin{align*}
Q_{\pm}:=\mathrm{clconv}\, \left\{e^{\mp 2i\phi}r-(ix)^{N+2}e^{\pm (N+2)i\phi}: 0<r<\infty, x\in \mathbb{R}_{\pm}\right\}
\end{align*}
we have
\begin{align}\label{eq_numwerte}
\mathbb{C}\backslash Q_{\pm} \subset \Pi (A_{0\pm}(\tau_{\pm})).
\end{align}
In particular, $Q_+$ and $Q_-$ are sectors in the complex plane
with opening angles strictly less than $\pi$,
\begin{equation}\label{13a}
\Pi (A_{0-}(\tau_-)) \cap \Pi (A_{0+}(\tau_+)) \neq
\emptyset.
\end{equation}
\end{theorem}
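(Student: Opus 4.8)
The plan is to obtain the five assertions in turn, reducing the adjoint identities and the deficiency count to the classical theory of $J$-symmetric Sturm--Liouville operators in \cite[III.10]{EE} and proving the last two statements by a numerical-range estimate together with an elementary description of the sets $Q_{\pm}$. First I would note that $A'_{0\pm}(\tau_{\pm})$ is densely defined: its domain contains $C_0^\infty$ of the open half-line, since such functions vanish to all orders at $0$, have compact support, and are mapped into $L^2(\mathbb{R}_{\pm})$ by $\tau_{\pm}$; hence its closure $A_{0\pm}(\tau_{\pm})$ and $A_{max\pm}(\tau_{\pm})$ are densely defined as well. The identities \eqref{regular} are then exactly the classical relations between minimal and maximal operators of a second-order expression with one regular endpoint: one uses Green's formula
\[
(\tau_{\pm}w,v)-(w,\tau^+_{\pm}v)=\Theta(w,v),\qquad w,v\in\mathrm{dom}\,A_{max\pm}(\tau_{\pm}),
\]
where $\Theta(w,v)$ is the limit of the boundary sesquilinear form, whose value at $0$ is non-degenerate in $(w(0),w'(0))$ and whose limit at $\pm\infty$ vanishes on the minimal domain; see \cite[III.10]{EE}. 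Here $\tau^+_{\pm}$ is the Lagrange adjoint expression, which by \eqref{tau_adjungierte22} equals $\overline{\tau_{\pm}}=\mathcal{T}\tau_{\pm}\mathcal{T}$. Since $\mathcal{T}$ is antiunitary with $\mathcal{T}^2=I$ and the conditions defining $\mathrm{dom}\,A'_{0\pm}(\tau_{\pm})$ are invariant under complex conjugation, $\mathcal{T}A'_{0\pm}(\tau_{\pm})\mathcal{T}=A'_{0\pm}(\tau^+_{\pm})$, and passing to closures gives $\mathcal{T}A_{0\pm}(\tau_{\pm})\mathcal{T}=A_{0\pm}(\tau^+_{\pm})\subset A_{max\pm}(\tau^+_{\pm})=A_{0\pm}(\tau_{\pm})^{*}$, the claimed inclusion.

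For the deficiencies, since $\mathcal{T}$ is an antiunitary bijection of $L^2(\mathbb{R}_{\pm})$ and $\mathcal{T}\lambda\mathcal{T}=\overline{\lambda}$, it maps $\mathrm{ran}(A_{0\pm}(\tau_{\pm})-\lambda)$ onto $\mathrm{ran}(A_{0\pm}(\tau^+_{\pm})-\overline{\lambda})$, so $\mathrm{def}\,(A_{0\pm}(\tau_{\pm})-\lambda)=\mathrm{def}\,(A_{0\pm}(\tau^+_{\pm})-\overline{\lambda})$. For $\lambda\in\Pi(A_{0\pm}(\tau_{\pm}))$ the range is closed, so this number equals $\dim\ker(A_{0\pm}(\tau_{\pm})^{*}-\overline{\lambda})=\dim\ker(A_{max\pm}(\tau^+_{\pm})-\overline{\lambda})$, the space of $L^2$-solutions of $\tau^+_{\pm}v=\overline{\lambda}v$; near the regular endpoint $0$ both solutions are square integrable, so this is $1$ or $2$ \cite[III.10]{EE}. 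By Lemma \ref{lemma_lpc}, $\tau^+_{\pm}$ is limit-point whenever $\tau_{\pm}$ is, so in the limit-point case the dimension is exactly $1$. Identity \eqref{d.d} is the matching quotient count: two independent boundary data $(w(0),w'(0))$ at the regular endpoint $0$ and, in the limit-point case, none at infinity, so $\mathrm{dom}\,A_{max\pm}(\tau_{\pm})/\mathrm{dom}\,A_{0\pm}(\tau_{\pm})$ is two-dimensional; again \cite[III.10]{EE} applies verbatim.

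Next I would establish \eqref{eq_numwerte} through the numerical range. For $w\in\mathrm{dom}\,A'_{0\pm}(\tau_{\pm})$, integration by parts — the boundary terms vanish since $w$ has compact support in $\mathbb{R}_{\pm}$ and $w(0)=w'(0)=0$ — gives
\[
(A_{0\pm}(\tau_{\pm})w,w)=e^{\mp 2i\phi}\int_{\mathbb{R}_{\pm}}|w'(x)|^2\,dx-\int_{\mathbb{R}_{\pm}}(ix)^{N+2}e^{\pm(N+2)i\phi}|w(x)|^2\,dx .
\]
If $w\neq 0$ then $r:=\int_{\mathbb{R}_{\pm}}|w'|^2>0$ (as $w(0)=0$), and dividing by $\|w\|^2$ displays $(A_{0\pm}(\tau_{\pm})w,w)/\|w\|^2$ as a convex combination of points $e^{\mp 2i\phi}r-(ix)^{N+2}e^{\pm(N+2)i\phi}$ with $r>0$, $x\in\mathbb{R}_{\pm}$, hence as a point of $Q_{\pm}$. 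Given $c\notin Q_{\pm}$, strict separation of $c$ from the closed convex set $Q_{\pm}$ yields $\theta\in\mathbb{R}$ and $\delta>0$ with $\mathrm{Re}\,(e^{i\theta}(\zeta-c))\ge\delta$ for all $\zeta\in Q_{\pm}$; therefore $|((A_{0\pm}(\tau_{\pm})-c)w,w)|\ge\delta\|w\|^2$ and, by Cauchy--Schwarz, $\|(A_{0\pm}(\tau_{\pm})-c)w\|\ge\delta\|w\|$ for all $w\in\mathrm{dom}\,A'_{0\pm}(\tau_{\pm})$. Since this is a core of $A_{0\pm}(\tau_{\pm})$, the estimate survives on the whole domain, so $c\in\Pi(A_{0\pm}(\tau_{\pm}))$, which is \eqref{eq_numwerte}.

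Finally, for the geometry: the generating set of $Q_{\pm}$ is the union of two rays from the origin, one in direction $\mp 2\phi$ and, writing $(ix)^{N+2}=|x|^{N+2}e^{\pm i(N+2)\pi/2}$ for $\pm x\ge 0$, the other in direction $\pi\pm(N+2)(\frac{\pi}{2}+\phi)$. These rays are antipodal precisely when $\sin\bigl((N+2)\frac{\pi}{4}+(N+4)\frac{\phi}{2}\bigr)=0$, i.e.\ exactly in the limit-circle case excluded by Theorems \ref{theorem} and \ref{prop}; so in the limit-point case the convex hull of two non-antipodal rays from $0$ is a closed sector with opening angle strictly below $\pi$, whence $\mathbb{C}\setminus Q_{\pm}\neq\emptyset$ and, with \eqref{eq_numwerte}, $\Pi(A_{0\pm}(\tau_{\pm}))\neq\emptyset$. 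A sector of opening angle $<\pi$ meets only points whose arguments fill an arc of length $<\pi$, and two such arcs cannot cover the full circle, so $Q_{+}\cup Q_{-}\neq\mathbb{C}$ (one may also observe $Q_{-}=\overline{Q_{+}}$), and \eqref{eq_numwerte} gives $\emptyset\neq\mathbb{C}\setminus(Q_{+}\cup Q_{-})\subset\Pi(A_{0-}(\tau_{-}))\cap\Pi(A_{0+}(\tau_{+}))$, which is \eqref{13a}. I expect the main obstacle to be the third step: turning the formal quadratic-form computation into the inclusion of the numerical range in $Q_{\pm}$ (handling zeros of $w$ and the fact that $|w|^2$ and $|w'|^2$ are a priori unrelated measures) and justifying the passage from the preminimal core to its closure; the verification that $Q_{\pm}$ has opening angle below $\pi$ off the limit-circle locus is short but essential bookkeeping, while the adjoint identities and the deficiency count are citations from \cite[III.10]{EE}.
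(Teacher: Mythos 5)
Your proposal follows the same skeleton as the paper's proof: the adjoint identities and the $1$-or-$2$ deficiency statement are delegated to \cite[III.10]{EE}, \eqref{eq_numwerte} is obtained by showing the numerical range of the preminimal operator lies in $Q_{\pm}$ and separating a point $c\notin Q_{\pm}$ by a rotated half-plane (your $e^{i\theta}$ is the paper's $e^{i\eta}$, your separating functional is the paper's closest boundary point $K$), the passage from the core to the closure is handled by the same approximation, and the sector geometry of $Q_{\pm}$ (non-antipodal rays off the limit-circle locus, $Q_-=Q_+^{*}$, hence $Q_+\cup Q_-\neq\mathbb{C}$) is identical. Two points of divergence are worth recording. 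First, the paper explicitly warns that \cite[III Theorem 10.7]{EE} does \emph{not} apply verbatim, because there the coefficient of $w''$ is assumed real-valued; it first multiplies $\tau_{\pm}$ by $e^{\pm 2i\phi}$ (which only shifts and rotates the spectral parameter) to land in the setting of \cite[III \S 10]{EE} and then undoes the normalization. Your Green's-formula sketch covers this in substance, but the blanket citation should be replaced by (or supplemented with) this one-line reduction. Second, for $\mathrm{def}=1$ and \eqref{d.d} the paper argues via \cite[III Theorem 5.6]{EE}: $\dim\mathrm{dom}\,A_{max\pm}(\tau_{\pm})/\mathrm{dom}\,A_{0\pm}(\tau_{\pm})=2\,\mathrm{def}$, a von Neumann-type decomposition of $\mathrm{dom}\,A_{max\pm}(\tau_{\pm})$, and the observation that the kernel of the product $(\mathcal{T}A_{max\pm}(\tau_{\pm})\mathcal{T}-\overline{\lambda})(A_{max\pm}(\tau_{\pm})-\lambda)$ is even-dimensional and at most $2$ in the limit-point case. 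You instead compute $\mathrm{def}$ directly as $\dim\ker(A_{max\pm}(\tau^+_{\pm})-\overline{\lambda})$ using closedness of the range for $\lambda\in\Pi$ and Lemma \ref{lemma_lpc}; this is a more direct and perfectly valid route to $\mathrm{def}=1$. However, your justification of \eqref{d.d} by a ``two boundary data at $0$, none at infinity'' count is the one informal step: that characterization of $\mathrm{dom}\,A_{0\pm}(\tau_{\pm})$ is exactly what the paper derives \emph{afterwards} (Proposition \ref{prop_A0}, via \cite[III Theorem 10.13]{EE}), so it is cleaner to obtain \eqref{d.d} as $2\,\mathrm{def}=2$ from \cite[III Theorem 5.6]{EE}, as the paper does, rather than to presuppose the boundary-condition description. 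With these two patches your argument is complete and equivalent to the paper's.
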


\begin{proof}
We will use  \cite[III Theorem 10.7]{EE}. It cannot
be used directly as the coefficient in front
of the second derivative in \cite[III Theorem 10.7]{EE}
is assumed to be real-valued. However, a multiplication
in \eqref{EVE2} by $e^{\pm 2i\phi}$ turns the eigenvalue
problem \eqref{EVE2}  into a problem considered
in  \cite[III Section 10]{EE} (with a shifted eigenvalue
parameter). Then \cite[III Theorem 10.7]{EE} holds for the
shifted problem and, again by a multiplication with
$e^{\mp 2i\phi}$, we see that  \cite[III Theorem 10.7]{EE}
is also valid for \eqref{EVE2}. Therefore it remains only to show \eqref{eq_numwerte} and that in the limit-point case $\mathrm{def}\, (A_{0\pm}(\tau_{\pm})-\lambda)=1$ and \eqref{d.d} hold.

Observe that
$$
Q_-^*:=\left\{\overline{x}:x\in Q_- \right\}=Q_+
$$
and $Q_{\pm}$ are convex sectors in the complex plane.
Assume that their opening is $\pi$, then we have for $x \in \mathbb{R}_+$ and some $k\in \mathbb Z$
$$
-2 \phi+2k\pi = \frac{\pi}{2}(N+2) +(N+2)\phi,
$$
and this gives
$$
\phi= \frac{2k\pi}{N+4} -\frac{(N+2)\pi}{2N+8}.
$$
For $x \in \mathbb{R}_-$ we obtain the same condition as
$Q_-^*=Q_+$. But this condition
is the condition for the limit-circle case and hence not possible, see Theorems \ref{theorem} and \ref{prop}. Therefore, the opening angle of $Q_\pm$ is strictly less then $\pi$ and  we have
\begin{equation}\label{13b}
Q_{+}\cup Q_- \neq \mathbb{C}.
\end{equation}
We  choose $\lambda \in \mathbb{C}\backslash Q_{\pm}$.
Because $Q_{\pm}$ are sectors with two rays as boundary (which may coincide) the distance $\delta(\lambda)$ between $\lambda$ and $Q_{\pm}$ is $\delta(\lambda)=|K-\lambda|$, where
$K$ is a point of the boundary of $Q_\pm$, i.e.,
$K\in \left\{e^{\mp 2i\phi}r: 0<r<\infty \right\}$ or $K\in R_{\pm}:=\left\{-(ix)^{N+2}e^{\pm (N+2)i\phi}: x\in \mathbb{R}_{\pm}\right\}$, cf.\ Figure \ref{Fig3}. There is a suitable angle $\eta \in (-\pi,\pi]$ with
$$
\delta(\lambda)=|K-\lambda|= e^{i\eta} (K-\lambda)
$$
The convexity of $Q_{\pm}$ induce that the straight line
$$
\left\{e^{\mp 2i\phi}r: r\in \mathbb{R} \right\}\text{ or resp.\ }\left\{-(i)^{N+2}e^{\pm i(N+2)\phi}s: s\in \mathbb{R}\right\}
$$
seperates $\lambda$ and $Q_{\pm}$, cf.\ Figure \ref{Fig3}.  Moreover we get after a rotation via the angle $\eta$ that $Q_{\pm}$ is located in the right half plane, cf.\ Figure \ref{Fig4},
\begin{align*}
\{z=e^{i\eta}q:q\in Q_{\pm}\} \subset \mathbb{C}_{\mathrm{Re}\, \geq 0}.
\end{align*}
\begin{figure}[ht]
\begin{minipage}[h]{0.48\textwidth}
\begin{center}
\begin{tikzpicture}[domain=-2:3]\label{Fig3}
\filldraw[fill opacity=0.15] (0,0) -- (2.2, 3.0) -- (-1.5, 3.0) --cycle;
\draw[->] (-2.2,0) -- (3.2,0) node[right] {$\mathrm{Re}$}; \draw[->] (0,-2.2)
-- (0,3.2) node[above] {$\mathrm{Im}$};
\foreach \x in {-2,-1,1,2,3} \draw (\x,-.1) -- (\x,.1)
node[below=4pt] {$\scriptstyle \x$}; \foreach \y in
{-2,-1,1,2,3} \draw (-.1,\y) -- (.1,\y) node[left=4pt]
{$\scriptstyle \y$}; \draw[-,thick] (-1.5,3.0) -- (1,-2.0)
node[right] {$\{e^{-2i\phi}r:r\in \mathbb{R}\}$};
\coordinate[label=left:$\lambda$] (A) at (-2,0.5);
\coordinate[label=left:$K$] (B) at (-0.7,1.4);
\draw (A) -- (B);
\fill (A) circle (2pt);
\fill (B) circle (2pt);
\draw ( 0.4, 2.0) node[right] {$Q_+$};
\draw ( 2, 2.3) node[right] {$R_+$};

\end{tikzpicture}\\
\caption{The line $\{e^{-2i\phi}r:r\in \mathbb{R}\}$ seperates $\lambda$ and $Q_+$}
\end{center}
\end{minipage} \ \ \
\begin{minipage}[h]{0.48\textwidth}
\begin{center}
\begin{tikzpicture}[domain=-2:3] \label{Fig4}
\filldraw[fill opacity=0.15] (0,0)  -- (0, 3.0)-- (3.0,3.0) -- (3.0, 1.5) --cycle;
\draw[->] (-2.2,0) -- (3.2,0) node[right] {$\mathrm{Re}$}; \draw[->] (0,-2.2)
-- (0,3.2) node[above] {$\mathrm{Im}$};
\foreach \x in {-2,-1,1,2,3} \draw (\x,-.1) -- (\x,.1)
node[below=4pt] {$\scriptstyle \x$}; \foreach \y in
{-2,-1,1,2,3} \draw (-.1,\y) -- (.1,\y) node[left=4pt]
{$\scriptstyle \y$}; 
\coordinate[label=left:$e^{i\eta}\lambda$] (A) at (-1.6,1.3);
\coordinate[label=right:$e^{i\eta}K$] (B) at (0,1.3);
\draw (A) -- (B);
\fill (A) circle (2pt);
\fill (B) circle (2pt);
\draw ( 0.4, 2.0) node[right] {$e^{i\eta}Q_+$};
\draw ( 2.2, 1) node[below] {$e^{i\eta}R_+$};
\end{tikzpicture}\\
\caption{Rotation via the angle $\eta$ and $e^{i\eta}Q_+$ lies in the right half plane}
\end{center}
\end{minipage}
\end{figure}
We obtain
\begin{align}\label{eq_intnumerwerte}
\mathrm{Re} \, e^{i\eta}\left(e^{\mp 2i\phi}r-(ix)^{N+2}e^{\pm (N+2)i\phi}-K\right)\geq 0, \quad \text{for } 0<r<\infty, x\in \mathbb{R}_{\pm}
\end{align}
For $\lambda \in \mathbb{C}\backslash Q_{\pm}$ we get for $u \in \mathrm{dom}(A'_{0\pm})$ and $\|u\|=1$
\begin{align*}
&\|(A'_{0\pm}(\tau_{\pm})-\lambda)u\|\geq |(A'_{0\pm}(\tau_{\pm})u,u)-\lambda|\\[1ex]
=& \Big|\int_0^{\infty}-e^{\mp 2i\phi}u''\overline{u}-(ix)^{N+2}e^{\pm (N+2)i\phi}|u|^2\, dx -\lambda\Big|\\[1ex]
=& \Big|\int_0^{\infty}e^{\mp 2i\phi}|u'|^2-(ix)^{N+2}e^{\pm (N+2)i\phi}|u|^2\, dx -\lambda\Big|\\[1ex]
\geq & \mathrm{Re}\, e^{i\eta}\left(\int_0^{\infty}e^{\mp 2i\phi}|u'|^2-(ix)^{N+2}e^{\pm (N+2)i\phi}|u|^2-K|u|^2\, dx +K-\lambda\right)
\end{align*}
Now \eqref{eq_intnumerwerte} implies
\begin{align*}
& \int_0^{\infty}\mathrm{Re}\, e^{i\eta}\left(e^{\mp 2i\phi}|u'|^2-(ix)^{N+2}e^{\pm (N+2)i\phi}|u|^2-K|u|^2\right)\, dx +\mathrm{Re}\, e^{i\eta}(K-\lambda)\\
&\geq  \delta(\lambda)>0.
\end{align*}
Hence $\mathbb{C}\backslash Q_{\pm} \subset \Pi (A'_{0\pm}(\tau_{\pm}))$ and in particular we have $\Pi (A'_{0\pm}(\tau_{\pm}))\neq \emptyset$. Now we choose for $y \in \mathrm{dom}\, A_{0\pm}(\tau_{\pm})$ a sequence $(x_n)_n\subset \mathrm{dom}\, A'_{0\pm}(\tau_{\pm})$ such that $x_n\rightarrow y$ and $A'_{0\pm}(\tau_{\pm}) x_n\rightarrow A_{0\pm}(\tau_{\pm})y$. Moreover for $\varepsilon>0$ choose $n$ large enough, such that $\|A'_{0\pm}(\tau_{\pm}) x_n-A_{0\pm}(\tau_{\pm})y\|\leq \varepsilon$ and $k(\lambda)\|x_n-y\| \leq \varepsilon.$ Then we obtain
\begin{align*}
&\|(A_{0\pm}(\tau_{\pm})-\lambda)y\|=\|(A_{0\pm}(\tau_{\pm})-\lambda)(y-x_n)+(A'_{0\pm}(\tau_{\pm})-\lambda)x_n\|\\[1ex]
&\geq \|(A'_{0\pm}(\tau_{\pm})-\lambda)x_n\|-\varepsilon\geq k(\lambda)\|x_n\| -\varepsilon\geq k(\lambda)\|y\|-2\varepsilon,
\end{align*}
and \eqref{eq_numwerte} follows. Moreover, from this and \eqref{13b} we obtain \eqref{13a}.

Now we can apply \cite[III Theorem 5.6]{EE} and obtain
\begin{align*}
\mathrm{dim}\, \mathrm{dom}\,(\mathcal{T}A^*_{0\pm}(\tau_{\pm})\mathcal{T}) /\mathrm{dom}\,A_{0\pm}(\tau_{\pm})=2\mathrm{def}\, (A_{0\pm}(\tau_{\pm})-\lambda)
\end{align*}
and
\begin{align*}
\mathrm{dom}\,&(\mathcal{T}A_{0\pm}(\tau_{\pm})^*\mathcal{T})\\[1ex]
& =\mathrm{dom}\,A_{0\pm}(\tau_{\pm})\dot{+}\mathrm{ker}\,\left((A_{0\pm}(\tau_{\pm})^*-\overline{\lambda})(\mathcal{T}A_{0\pm}(\tau_{\pm})^*\mathcal{T}-\lambda)\right).
\end{align*}
With
$A^*_{0\pm}(\tau_{\pm})=A_{max\pm}(\tau^+_{\pm})=A_{max\pm}(\overline{\tau}_{\pm})$ and
$$
\mathcal{T}A^*_{0\pm}(\tau_{\pm})\mathcal{T}=\mathcal{T}A_{max\pm}(\overline{\tau}_{\pm})\mathcal{T}=A_{max\pm}(\tau_{\pm})
$$
we obtain
\begin{align*}
\mathrm{dim}\, \mathrm{dom}\,A_{max\pm}(\tau_{\pm}) /\mathrm{dom}\,A_{0\pm}(\tau_{\pm})=2\mathrm{def}\, (A_{0\pm}(\tau_{\pm})-\lambda)
\end{align*}
and
\begin{align}\label{formel_dimker}
&\mathrm{dom}\,(A_{max\pm}(\tau_{\pm}))\\[1ex]\nonumber
& =\mathrm{dom}\,A_{0\pm}(\tau_{\pm})\dot{+}\mathrm{dim}\,\mathrm{ker}\,\left((\mathcal{T}A_{max\pm}(\tau_{\pm})\mathcal{T}-\overline{\lambda})(A_{max\pm}(\tau_{\pm})-\lambda)\right).
\end{align}
Because $\tau_{\pm}$ and $\mathcal{T}\tau_{\pm}\mathcal{T}$ are in the limit-point case, cf.~Lemma \ref{lemma_lpc}, the equations $(\tau_{\pm}-\lambda)u=0$ and $(\mathcal{T}\tau_{\pm}\mathcal{T}-\overline{\lambda})u=0$ have only one solution in $L^2(\mathbb{R}_{\pm})$. Therefore there is only one function $u$ with $(A_{max\pm}(\tau_{\pm})-\lambda)u=0$.
Moreover, we have from \cite[III Theorem 5.6]{EE},
\begin{align*}
\mathrm{dim}\, \mathrm{dom}\,A_{max\pm}(\tau_{\pm}) /\mathrm{dom}\,A_{0\pm}(\tau_{\pm})=2\mathrm{def}\, (A_{0\pm}(\tau_{\pm})-\lambda)
\end{align*}
plus equations \eqref{regular} and \eqref{formel_dimker},
that $\mathrm{dim}\,\mathrm{ker}\,((\mathcal{T}A_{max\pm}(\tau_{\pm})\mathcal{T}-\overline{\lambda})(A_{max\pm}(\tau_{\pm})-\lambda))$ is even and because of the limit-point case at most $2$. Hence
 $$
 \mathrm{dim}\,\mathrm{ker}\,((\mathcal{T}A_{max\pm}(\tau_{\pm})\mathcal{T}-\overline{\lambda})(A_{max\pm}(\tau_{\pm})-\lambda))=2
 $$
and we obtain
\begin{align*}
2=\mathrm{dim}\, \mathrm{dom}\,A_{max\pm}(\tau_{\pm}) /\mathrm{dom}\,A_{0\pm}(\tau_{\pm})=2\mathrm{def}\, (A_{0\pm}(\tau_{\pm})-\lambda).
\end{align*}
\qed
\end{proof}
With \cite[III Theorem 10.13]{EE} the following proposition follows immediately.
\begin{proposition}\label{prop_A0}
We obtain in the limit-point case
$$
\mathrm{dom}\, A_{0\pm}(\tau_{\pm})=\left\{w \in \mathrm{dom}\, A_{0\pm}(\tau_{\pm}): w(0)=w'(0)=0\right\}
$$
and for $u \in \mathrm{dom}\, A_{max\pm}(\tau_{\pm})$
and $v\in \mathrm{dom}\, A_{max\pm}(\tau^+_{\pm})$
$$
\lim_{x\rightarrow \pm \infty}\left(u\overline{v}'-u'\overline{v}\right)(x)=0.
$$
\end{proposition}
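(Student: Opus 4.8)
The plan is to obtain both assertions from \cite[III Theorem 10.13]{EE}, exactly as the preceding theorem was obtained from \cite[III Theorem 10.7]{EE}: the only reason that theorem does not apply to $\tau_{\pm}$ verbatim is the non-real coefficient $e^{\mp 2i\phi}$ in front of $w''$. So first I would multiply the eigenvalue equation \eqref{EVE2} by $e^{\pm 2i\phi}$, which replaces $\tau_{\pm}$ by the $\mathcal{T}$-symmetric expression
\begin{align*}
\widetilde{\tau}_{\pm}w(x):=-w''(x)-(ix)^{N+2}e^{\pm(N+4)i\phi}w(x),
\end{align*}
with constant leading coefficient $1$ and a potential that is smooth up to the regular endpoint $0$, and which only shifts the spectral parameter from $\lambda$ to $e^{\pm 2i\phi}\lambda$. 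This $\widetilde{\tau}_{\pm}$ is of the type treated in \cite[III Section 10]{EE}, with regular endpoint $0$ and singular endpoint $\pm\infty$, so \cite[III Theorem 10.13]{EE} applies to it.

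Next I would check that the normalisation changed nothing essential: multiplying a differential expression by a non-zero constant leaves both the maximal and the minimal domain unchanged, so $A_{0\pm}(\widetilde{\tau}_{\pm})=A_{0\pm}(\tau_{\pm})$, $\mathrm{dom}\,A_{max\pm}(\widetilde{\tau}_{\pm})=\mathrm{dom}\,A_{max\pm}(\tau_{\pm})$, and the same for the formal adjoints; moreover multiplying the equation by a non-zero constant together with shifting $\lambda$ does not affect the number of $L^2(\mathbb{R}_{\pm})$-solutions, so by Theorem \ref{prop} and Lemma \ref{lemma_lpc} the expressions $\widetilde{\tau}_{\pm}$ and $\widetilde{\tau}_{\pm}^{+}$ are in the limit-point case whenever $\tau_{\pm}$ is. Then \cite[III Theorem 10.13]{EE}, in the limit-point case, yields on the one hand that the minimal domain is cut out of the maximal domain by the two (independent) boundary conditions at the regular endpoint,
\begin{align*}
\mathrm{dom}\,A_{0\pm}(\tau_{\pm})=\bigl\{w\in\mathrm{dom}\,A_{max\pm}(\tau_{\pm}):w(0)=w'(0)=0\bigr\},
\end{align*}
which is also consistent with \eqref{d.d} from Theorem \ref{Hilfsstory}; and on the other hand that the Lagrange bilinear concomitant of the pair $(\widetilde{\tau}_{\pm},\widetilde{\tau}_{\pm}^{+})$ vanishes at the singular endpoint, i.e.\ $\lim_{x\to\pm\infty}(u\overline{v}'-u'\overline{v})(x)=0$ for all $u\in\mathrm{dom}\,A_{max\pm}(\widetilde{\tau}_{\pm})$ and $v\in\mathrm{dom}\,A_{max\pm}(\widetilde{\tau}_{\pm}^{+})$.

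Finally I would transport this last statement back to $\tau_{\pm}$. The Lagrange identity for $-(pw')'+qw$ gives the concomitant $p(x)(u\overline{v}'-u'\overline{v})(x)$; for $\widetilde{\tau}_{\pm}$ this is $(u\overline{v}'-u'\overline{v})(x)$ and for $\tau_{\pm}$ it is the non-zero multiple $e^{\mp 2i\phi}(u\overline{v}'-u'\overline{v})(x)$, so their limits at $\pm\infty$ vanish simultaneously; since the maximal domains of $\tau_{\pm}$ and $\tau_{\pm}^{+}$ coincide with those of $\widetilde{\tau}_{\pm}$ and $\widetilde{\tau}_{\pm}^{+}$, this is precisely the second assertion of the proposition. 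The hard part here is not analytic — there is no estimate to make, the work having been done in Theorem \ref{Hilfsstory} and in \cite{E,EE} — but bookkeeping: making sure the hypotheses, the normalisation of the leading coefficient, the form of the concomitant, and the notion of the limit-point case in \cite[III Section 10]{EE} line up with the present setting, after which the proposition indeed follows at once.
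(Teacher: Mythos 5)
Your proposal is correct and takes essentially the same route as the paper, whose entire proof is the one-line remark that the proposition follows from \cite[III Theorem 10.13]{EE}; you merely make explicit the normalisation by $e^{\pm 2i\phi}$ of the leading coefficient (the same device the paper uses in the proof of Theorem \ref{Hilfsstory} to invoke \cite[III Theorem 10.7]{EE}) and the transport of the Lagrange concomitant back to $\tau_{\pm}$. You also, correctly, read the first identity with $\mathrm{dom}\,A_{max\pm}(\tau_{\pm})$ on the right-hand side, which is evidently what the statement intends.
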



\section{Maximal and minimal operators on the full axis}
\label{sec:3}

Here we define and study the maximal and the minimal operator on the real line. We do this by composition of the corresponding operators on the semi-axis from Section \ref{sec:2}.

The maximal operator on $\mathbb{R}$ is given by
$$D_{max}:=\left\{w \in L^2(\mathbb{R}):  \tau_\pm w|_{\mathbb R_\pm} \in L^2(\mathbb{R}), w \vert _{\mathbb{R_\pm}}, w'\vert _{\mathbb{R_\pm}}  \in AC_{loc}(\mathbb{R_\pm})\right\}$$ and
\begin{align*}
A_{max}w(x):=\left\{
\begin{array}{cc}
\tau_+w(x), & x \geq 0,\\
\tau_-w(x), & x \leq 0.
\end{array}\right.
\end{align*}
or, what is the same,
$$
A_{max} = A_{max-}(\tau_{-}) \oplus  A_{max+}(\tau_{+}).
$$

We define the parity $\mathcal P$. One has to be careful
how to define it. In the literature it is quite often just defined by the (somehow sloppy) notion $x\mapsto -x$.
More precisely, we have for a function $f\in L^2(\mathbb{R})$
with $f_+:=f|_{\mathbb R_+}$ and $f_-:=f|_{\mathbb R_-}$
\begin{align*}
(\mathcal{P}f)(x):=\left\{
\begin{array}{ll}
f_-(-x) & \mbox{ if }x\geq 0,\\[1ex]
f_+(-x) & \mbox{ if }x<0.
\end{array} \right.
\end{align*}

The parity $\mathcal{P}$ gives rise to a new inner product,
which was considered in many papers, we mention here only
\cite{M05,M06,M10,T06}.
It is the right inner product in which the operators exhibit symmetry properties, as we will show below,
\begin{align*}
[\cdot,\cdot]=(\mathcal{P}\cdot,\cdot).
\end{align*}
\begin{lemma}\label{lemma_part_int}
For $v,w \in D_{max}$ we have
\begin{align*}
&[A_{max} w,v]-[w,A_{max} v]\\
&=e^{2i\phi}(w'(0+)\overline{v}(0-)+w(0+)\overline{v}'(0-))-e^{-2i\phi}(w'(0-)\overline{v}(0+)+w(0-)\overline{v}'(0+)).
\end{align*}
\end{lemma}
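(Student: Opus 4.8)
The plan is to compute $[A_{max}w,v]-[w,A_{max}v]$ directly by unfolding the definition of $[\cdot,\cdot]=(\mathcal P\cdot,\cdot)$ into ordinary $L^2$-integrals over $\mathbb R_+$ and $\mathbb R_-$, and then integrate by parts twice on each half-line, tracking only the boundary terms at $0$ (the terms at $\pm\infty$ vanishing by Proposition \ref{prop_A0}). First I would write, for $f\in L^2(\mathbb R)$, the splitting $f=f_++f_-$ with $f_\pm = f|_{\mathbb R_\pm}$, and note that by the definition of $\mathcal P$ one has
$$
[A_{max}w,v]=\int_{\mathbb R}(\mathcal P A_{max}w)(x)\,\overline{v(x)}\,dx
=\int_0^\infty (\tau_- w)(-x)\,\overline{v_+(x)}\,dx+\int_{-\infty}^0(\tau_+ w)(-x)\,\overline{v_-(x)}\,dx .
$$
Substituting $x\mapsto -x$ in each integral turns this into $\int_{-\infty}^0(\tau_- w)(t)\,\overline{v_+(-t)}\,dt+\int_0^\infty(\tau_+w)(t)\,\overline{v_-(-t)}\,dt$; in other words, $[A_{max}w,v]$ is the $L^2(\mathbb R_-)$-pairing of $\tau_- w$ with $(\mathcal P v)|_{\mathbb R_-}$ plus the $L^2(\mathbb R_+)$-pairing of $\tau_+ w$ with $(\mathcal P v)|_{\mathbb R_+}$. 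The point of this bookkeeping is that the parity exactly swaps the two half-lines, so the mixed boundary data at $0$ from the $+$ side gets paired against the $-$ side and vice versa, which is the source of the asymmetric-looking result.

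Next I would carry out the integration by parts. On $\mathbb R_+$ the relevant expression is $\int_0^\infty\big(\tau_+w\cdot\overline{g}-w\cdot\overline{\tau_+^+ g}\big)$ for the appropriate $g$ coming from $\mathcal P v$; since the potential terms $-(ix)^{N+2}e^{(N+2)i\phi}$ in $\tau_+$ and $-(-ix)^{N+2}e^{-(N+2)i\phi}$ in $\tau_+^+=\overline{\tau_+}$ are complex conjugates (see \eqref{tau_adjungierte22}), the potential contributions cancel pointwise, and only the second-derivative part survives. Two integrations by parts of $-e^{-2i\phi}w''\overline g + e^{-2i\phi} w\,\overline{g''}$ (here using that $\overline{\tau_+^+g}$ contributes $-e^{-2i\phi}\overline{g''}$, i.e. the constant $e^{-2i\phi}$ is the same because $\tau_+^+$ carries $e^{+2i\phi}$ which conjugates back) produce the Wronskian-type boundary term $e^{-2i\phi}\big(w'(0+)\overline{g(0+)}-w(0+)\overline{g'(0+)}\big)$ up to sign, evaluated at $0$ only. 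Then I substitute back $g(x)=(\mathcal P v)(x)=v_-(-x)$ on $\mathbb R_+$, so $g(0+)=v_-(0-)$ and $g'(0+)=-v_-'(0-)$; the sign flip from the chain rule is exactly what converts the antisymmetric Wronskian into the symmetric-looking sum $w'(0+)\overline{v(0-)}+w(0+)\overline{v'(0-)}$. The analogous computation on $\mathbb R_-$ gives the term with $e^{+2i\phi}$ and the roles of $0+$ and $0-$ interchanged, and assembling the four pieces with the correct signs yields the claimed formula.

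The main obstacle — really the only place care is needed — is tracking the interplay of three sign conventions at once: the $-x$ substitution in $\mathcal P$ (which flips the sign of every derivative via the chain rule and also swaps the two endpoints $0+\leftrightarrow 0-$), the alternating $e^{\mp 2i\phi}$ coefficients in $\tau_\pm$ versus $\tau_\pm^+$, and the two successive integrations by parts each contributing a sign. I would handle this by computing one half-line completely and explicitly (say $\mathbb R_+$), writing every intermediate boundary term, and then obtaining the $\mathbb R_-$ contribution by the formal substitution $x\mapsto -x$, $\phi\mapsto\phi$, $(0+)\leftrightarrow(0-)$, rather than redoing it from scratch. A secondary, routine point is the justification that the $\pm\infty$ boundary terms vanish: this is immediate from Proposition \ref{prop_A0} applied to $u=w_\pm\in\mathrm{dom}\,A_{max\pm}(\tau_\pm)$ and $v=(\mathcal P v)_\pm$, after checking that $(\mathcal P v)|_{\mathbb R_\pm}\in\mathrm{dom}\,A_{max\pm}(\tau_\pm^+)$, which follows because $v\in D_{max}$ and $\mathcal P$ maps $D_{max}$ onto the corresponding maximal domain for the conjugated expressions.
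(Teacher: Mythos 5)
Your proposal is correct and takes essentially the same route as the paper: expand $[\cdot,\cdot]=(\mathcal{P}\cdot,\cdot)$ into half-line integrals, observe that the potential terms cancel because $\tau_\pm^+=\overline{\tau_\pm}$, integrate by parts twice to extract the Wronskian-type terms at $0$ (with the chain-rule sign from $\mathcal{P}$ turning them into the symmetric-looking sums), and dispose of the contributions at $\pm\infty$ via Proposition \ref{prop_A0} after checking that the reflected/conjugated functions lie in the appropriate maximal domains. One remark: if you carry the bookkeeping through to the end, your computation (exactly like the final display of the paper's own proof) attaches $e^{-2i\phi}$ to the term $w'(0+)\overline{v}(0-)+w(0+)\overline{v}'(0-)$ and $e^{2i\phi}$ to the term $w'(0-)\overline{v}(0+)+w(0-)\overline{v}'(0+)$, i.e.\ the opposite assignment of phases to the one printed in the lemma statement, so the discrepancy lies in the statement as printed rather than in your argument.
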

\begin{proof}
As
$\tau^+_{\pm}=\mathcal{T}\tau_{\pm}\mathcal{T}$
(see \eqref{tau_adjungierte22}) we have
$\overline{v}|_{\mathbb R_\pm} \in \mathrm{dom}\,
A_{max\pm}(\tau^+_{\pm})$. From
$$
\tau_\pm \overline{w(-x)} = \overline{\tau_\mp w(-x)},
$$
we see that the function $x\mapsto \overline{w(-x)}$
for $x \in \mathbb R_\pm$, is in $\mathrm{dom}\,
A_{max\pm}(\tau_{\pm})$. Then Proposition \ref{prop_A0} gives
\begin{equation}\label{propeq}
\lim_{x\rightarrow \pm \infty}\overline{w(-x)} v'(x)-\overline{w'(-x)}v(x)=0.
\end{equation}
We have
\begin{align*}
&[A_{max} w,v]-[w,A_{max} v]=(\mathcal{P}A_{max} w,v)-(\mathcal{P}w,A_{max} v)\\
=&\int_{-\infty}^0 \tau_+w(-x) \overline{v}(x)\, dx+\int_0^{\infty} \tau_-w(-x) \overline{v}(x)\, dx\\&-\int_{-\infty}^0 w(-x)\overline{\tau_-v}(x)\, dx-\int_0^{\infty} w(-x)\overline{\tau_+v}(x)\, dx\\
=& \int_0^{\infty}\left(-e^{2i\phi}w''(-x)-(-ix)^{N+2}e^{-(N+2)i\phi}w(-x)\right)\overline{v}(x)\, dx\\&+\int_{-\infty}^{0}\left(-e^{-2i\phi}w''(-x)-(-ix)^{N+2}e^{(N+2)i\phi}w(-x)\right)\overline{v}(x)\, dx\\&-\int_0^{\infty}w(-x)\overline{\left(-e^{-2i\phi}v''(x)-(ix)^{N+2}e^{(N+2)i\phi}v(x)\right)}\, dx\\&-\int_{-\infty}^{0}w(-x)\overline{\left(-e^{2i\phi}v''(x)-(ix)^{N+2}e^{-(N+2)i\phi}v(x)\right)}\, dx\\
=&  e^{2i\phi} \int_0^{\infty}
w(-x)  \overline{v}''(x)-
w''(-x)\overline{v}(x)\, dx\\
&+e^{-2i\phi} \int_{-\infty}^{0}
w(-x)  \overline{v}''(x)-
w''(-x)\overline{v}(x)\, dx.
\end{align*}
Integration by parts gives
\begin{align*}
&[A_{max} w,v]-[w,A_{max} v]\\
=& \lim_{x\rightarrow \infty}e^{2i\phi}(w'(-x)\overline{v}(x)+w(-x)\overline{v}'(x))-\lim_{x\rightarrow -\infty}e^{-2i\phi}(w'(-x)\overline{v}(x)+w(-x)\overline{v}'(x))\\
&+e^{-2i\phi}(w'(0+)\overline{v}(0-)+w(0+)\overline{v}'(0-))-e^{2i\phi}(w'(0-)\overline{v}(0+)+w(0-)\overline{v}'(0+))
\end{align*}
Then \eqref{propeq} (after taking the complex conjugate)
shows the statement of the lemma.\qed
\end{proof}

Similar as the maximal operator on the real line, we define the minimal operator $A_0$ on the real line as the direct sum of the corresponding minimal operators on the half-axis,
\begin{align*}
A_0=A_{0-}(\tau_-)\oplus A_{0+}(\tau_+).
\end{align*}
Observe that with Proposition \ref{prop_A0} the domain of $A_0$ is given via
$$
\mathrm{dom}\, A_0 =
\{w\in D_{max}:w(0+)=w(0-)=w'(0+)=w'(0-)=0\}
$$
and Theorem \ref{Hilfsstory} gives for
$\lambda \in
\Pi(A_{0})=\Pi(A_{0-}(\tau_-)) \cap \Pi(A_{0+}(\tau_+))
$, which is by \eqref{13a} non-empty,
\begin{equation}\label{PlusPunktPlus}
\mathrm{def}\,(A_{0}-\lambda)=\mathrm{def}\,(A_{0-}(\tau_-)-\lambda)+\mathrm{def}\,(A_{0+}(\tau_+)-\lambda)=2.
\end{equation}

Let $A$ be a densely defined, closed operator in $(L^2(\mathbb{R}),\left[\cdot,\cdot\right])$ the adjoint $A^+$ of $A$ with respect to $[\cdot,\cdot]$ is defined on $\mathrm{dom} \, A^+$. This is the set of all $y \in L^2(\mathbb{R})$, such that there is a $z \in L^2(\mathbb{R})$ with
\begin{align*}
[Ax,y]=[x,z],\quad \text{ for all } x \in \mathrm{dom}\, A
\end{align*}
and we set $$A^+y:=z.$$ An operator $A$ is called symmetric with respect to $[\cdot,\cdot]$ (or $[\cdot,\cdot]$-symmetric) if $A \subset A^+$ and self-adjoint with respect to $[\cdot,\cdot]$ (or $[\cdot,\cdot]$-self-adjoint) if $A=A^+$. With Lemma \ref{lemma_part_int} the following follows immediately.

\begin{proposition}\label{Kreinsymmetry}
$A_0$ is symmetric with respect to $[\cdot,\cdot].$ Moreover $A_0^+=A_{max}=A_{max-}(\tau_-)\oplus A_{max+}(\tau_+).$
\end{proposition}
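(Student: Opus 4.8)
The plan is to deduce both assertions directly from Lemma~\ref{lemma_part_int}, using the half-axis results of Theorem~\ref{Hilfsstory} and Proposition~\ref{prop_A0}. First note that $A_0=A_{0-}(\tau_-)\oplus A_{0+}(\tau_+)$ is densely defined (its domain contains the compactly supported functions vanishing to first order at $0$) and closed, so the $[\cdot,\cdot]$-adjoint $A_0^{+}$ is a well-defined closed operator. For the \emph{symmetry}: if $w,v\in\mathrm{dom}\,A_0$ then, by Proposition~\ref{prop_A0}, all of $w(0\pm),w'(0\pm),v(0\pm),v'(0\pm)$ vanish, hence the right-hand side of the identity in Lemma~\ref{lemma_part_int} is zero; since $A_0\subset A_{max}$ this gives $[A_0w,v]=[w,A_0v]$, i.e.\ $A_0\subset A_0^{+}$. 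For the \emph{inclusion $A_{max}\subseteq A_0^{+}$}: apply Lemma~\ref{lemma_part_int} once more, now with $w\in\mathrm{dom}\,A_0$ (so its boundary data at $0$ again all vanish) and $v\in D_{max}$ arbitrary; the right-hand side is still $0$, hence $[A_0w,v]=[w,A_{max}v]$ for every $w\in\mathrm{dom}\,A_0$, which by the definition of $A_0^{+}$ means $v\in\mathrm{dom}\,A_0^{+}$ and $A_0^{+}v=A_{max}v$.

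The remaining inclusion $A_0^{+}\subseteq A_{max}$ is the step with content, and for it I would pass through the ordinary $L^2(\mathbb{R})$-adjoint $A_0^{*}$. Since the parity $\mathcal{P}$ is a unitary, self-adjoint involution on $L^2(\mathbb{R})$, a short manipulation of $[\cdot,\cdot]=(\mathcal{P}\cdot,\cdot)$ shows $A_0^{+}=\mathcal{P}A_0^{*}\mathcal{P}$ with $\mathrm{dom}\,A_0^{+}=\mathcal{P}(\mathrm{dom}\,A_0^{*})$; and since the adjoint of an orthogonal direct sum is the direct sum of the adjoints, \eqref{regular} of Theorem~\ref{Hilfsstory} together with \eqref{tau_adjungierte22} gives $A_0^{*}=A_{0-}(\tau_-)^{*}\oplus A_{0+}(\tau_+)^{*}=A_{max-}(\overline{\tau}_-)\oplus A_{max+}(\overline{\tau}_+)$. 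It then remains to check that conjugating this direct sum by $\mathcal{P}$ reproduces $A_{max}$. For this I would use the non-conjugated intertwining identities $\tau_\pm\bigl(w(-\cdot)\bigr)(x)=\bigl(\overline{\tau}_\mp w\bigr)(-x)$, obtained by replacing $w$ with $\overline{w}$ in the identity $\tau_\pm\overline{w(-x)}=\overline{\tau_\mp w(-x)}$ already used in the proof of Lemma~\ref{lemma_part_int} (recall $\overline{\tau}_\pm=\mathcal{T}\tau_\pm\mathcal{T}$), together with the elementary observation that $w\mapsto w(-\cdot)$ maps $AC_{loc}(\mathbb{R}_\mp)$ bijectively onto $AC_{loc}(\mathbb{R}_\pm)$ and $L^2(\mathbb{R}_\mp)$ onto $L^2(\mathbb{R}_\pm)$. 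Feeding $\mathcal{P}f$ through $A_0^{*}$ and then through $\mathcal{P}$, and tracking the two half-axis components, one finds that the $\mathbb{R}_\pm$-component of $\mathcal{P}A_0^{*}\mathcal{P}f$ equals $\tau_\pm f|_{\mathbb{R}_\pm}$ and that $f$ lies in its domain precisely when $f|_{\mathbb{R}_\pm},f'|_{\mathbb{R}_\pm}\in AC_{loc}(\mathbb{R}_\pm)$ and $\tau_\pm f|_{\mathbb{R}_\pm}\in L^2(\mathbb{R}_\pm)$, i.e.\ precisely when $f\in D_{max}$. Hence $\mathcal{P}A_0^{*}\mathcal{P}=A_{max}$, and combined with the previous paragraph this yields $A_0^{+}=A_{max}$.

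The main obstacle is exactly this last inclusion. The delicate point is that $\mathcal{P}$ reflects \emph{and} interchanges the two half-axes but does \emph{not} involve complex conjugation, so it has to be paired with the correct, non-conjugated, intertwining relation between $\tau_\pm$ and $\overline{\tau}_\mp$, and one must keep careful track of the resulting twisted direct-sum bookkeeping (which half-axis component of $A_0^{*}$ produces which component of $A_{max}$) as well as of the passage between the $[\cdot,\cdot]$-adjoint and the $L^2$-adjoint. Once those identifications are set up correctly, the verification is routine; a codimension count based on \eqref{d.d} could in principle replace it, but it runs into the fact that $\mathrm{dom}\,A_{0\pm}(\overline{\tau}_\pm)$ and $\mathrm{dom}\,A_{0\pm}(\tau_\pm)$ need not coincide, so I would prefer the direct argument above.
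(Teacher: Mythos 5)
Your proof is correct, and it reaches the conclusion by a route that is organized differently from the paper's. The paper proves the identity $A_0^{+}=A_{max}$ by a single chain of inner-product manipulations: for $w\in\mathrm{dom}\,A_{max}$ and $v\in\mathrm{dom}\,A_0$ it writes $[A_{max}w,v]=(A_{max}w,\mathcal{P}v)$, splits into the two half-axes, replaces $A_{max\pm}(\tau_\pm)$ by $\mathcal{T}A_{0\pm}(\tau_\pm)^{*}\mathcal{T}$ via \eqref{regular}, moves the adjoint across, and reassembles using $\mathcal{PT}A_0=A_0\mathcal{PT}$ to get $[w,A_0v]$; the reverse inclusion is asserted to follow ``in a similar way.'' You instead split the work into the two inclusions: the easy one, $A_{max}\subseteq A_0^{+}$ (and the symmetry of $A_0$), you read off from the vanishing of the boundary terms in Lemma \ref{lemma_part_int} when $w\in\mathrm{dom}\,A_0$ — which is exactly what the paper's phrase ``follows immediately'' from that lemma intends — and the inclusion with content, $A_0^{+}\subseteq A_{max}$, you obtain from the operator identity $A_0^{+}=\mathcal{P}A_0^{*}\mathcal{P}$ together with $A_0^{*}=A_{max-}(\overline{\tau}_-)\oplus A_{max+}(\overline{\tau}_+)$ from \eqref{regular} and the non-conjugated intertwining $\tau_\pm\bigl(w(-\cdot)\bigr)(x)=(\overline{\tau}_\mp w)(-x)$, which I have checked against \eqref{tau_adjungierte}. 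Both arguments rest on the same two ingredients (the half-axis adjoint relations of Theorem \ref{Hilfsstory} and the reflection intertwining of $\tau_\pm$ with $\overline{\tau}_\mp$), but your version makes explicit, as a clean operator identity, the step the paper leaves to the reader, and it correctly identifies the swap of half-axis components under $\mathcal{P}$ as the place where care is needed; your closing remark that a pure codimension count via \eqref{d.d} would be obstructed by $\mathrm{dom}\,A_{0\pm}(\overline{\tau}_\pm)\neq\mathrm{dom}\,A_{0\pm}(\tau_\pm)$ is also well taken.
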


\begin{proof}
It remains to show that $A_0^+=A_{max}.$
With  \eqref{regular} in
Theorem \ref{Hilfsstory} we obtain for $w \in \mathrm{dom}\, A_{max}$ and $v \in \mathrm{dom}\, A_{0}$
\begin{align*}
&[A_{max}w,v]=(\mathcal{P}A_{max}w,v)=(A_{max}w,\mathcal{P}v)\\
&=(A_{max-}(\tau_-)w \vert_{\mathbb{R}_-}, v\vert_{\mathbb{R}_+}(-\cdot))+(A_{max+}(\tau_+)w \vert_{\mathbb{R}_+}, v\vert_{\mathbb{R}_-}(-\cdot))\\
&= (\mathcal{T}A^*_{0-}(\tau_{-})\mathcal{T}w \vert_{\mathbb{R}_-}, v\vert_{\mathbb{R}_+}(-\cdot))+(\mathcal{T}A^*_{0+}(\tau_{+})\mathcal{T}w \vert_{\mathbb{R}_+}, v\vert_{\mathbb{R}_-}(-\cdot))\\ &= (w \vert_{\mathbb{R}_-}, \mathcal{T}A_{0-}(\tau_{-})\mathcal{T}v\vert_{\mathbb{R}_+}(-\cdot))+(w \vert_{\mathbb{R}_+}, \mathcal{T}A_{0+}(\tau_{+})\mathcal{T}v\vert_{\mathbb{R}_-}(-\cdot))\\
&= (w,\mathcal{T}A_0\mathcal{TP}v)=(\mathcal{P}w,\mathcal{PT}A_0\mathcal{TP}v)=(\mathcal{P}w,A_0v)=[w,A_0v],
\end{align*}
because $\mathcal{PT}A_0=A_0\mathcal{PT}.$ So we have $A_0\subset A_{max}^+$ and in a similar way we obtain $A_{max}^+\subset A_{0}$.
\qed
\end{proof}

\begin{remark}
The space $(L^2(\mathbb{R}),\left[\cdot,\cdot\right])$ is a Krein space, see \cite{D95,D99,M05,T06}. For a more
advanced introduction to operators in Krein spaces we
refer to the monographs \cite{AI,Bog}. We mention here only that the operator $A_0$ according to Proposition \ref{Kreinsymmetry} is $[\cdot,\cdot]$-symmetric
in the Krein space $(L^2(\mathbb{R}),\left[\cdot,\cdot\right])$.
\end{remark}

\section{Operator based approach to $\mathcal{PT}$-symmetric Hamiltonians}
\label{sec:4}

In this section we define the operator $A$ corresponding to \eqref{SEE} and \eqref{EVE2} on the full real axis with a coupling condition in $0$. It is an extension of the minimal
operator $A_0$ and a restriction of the maximal
operator $A_{max}$, both studied
in Section \ref{sec:3}.

 Here we restrict ourselves to a coupling of the form
 $w(0+)=w(0-)$ and $w'(0+)=\alpha w'(0-)$ in zero as we want
 $w$, and hence $y$ (see \eqref{SEE}), to be continuous. As we will see below, it is reasonable to allow a jump of $w'$ in $0$. So we define for a fixed complex number $\alpha$ an extension $A$ of $A_0$ by
\begin{align*}
\mathrm{dom}\,(A):=\left\{w \in D_{max}: w(0+)=w(0-), w'(0+)=\alpha w'(0-) \right\}
\end{align*}
\begin{align*}
Au:=A_{max}u.
\end{align*}

\begin{definition}
We call a closed densely defined operator $A$ defined on $L^2(\mathbb{R})$ $\mathcal{PT}$-symmetric if and only if for all $f \in \mathrm{dom}\, A$ we have $\mathcal{PT}f \in \mathrm{dom}\, A$ and $\mathcal{PT}Af=A\mathcal{PT}f$, see also \cite[III. §5.6]{K}.
\end{definition}

\begin{theorem}\label{thm_A_eig}
Let $w\in \mathrm{dom}\, A$ and let $y$ satisfy
$w(x)=y(z(x))$,  where $z$ is given by  \eqref{param}.
Then we have
\begin{itemize}
\item[\rm (i)] $y'$ is continuous if and only if $\alpha=e^{2i\phi}$.
\item[\rm (ii)] $A$ is $\mathcal{PT}$-symmetric if and only if $|\alpha|=1$.
\item[\rm (iii)] $A$ is self-adjoint with respect to $[\cdot,\cdot]$, if and only if $\alpha=e^{-4i\phi}$.
\end{itemize}
\end{theorem}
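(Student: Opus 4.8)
The plan is to use the connection formula $w(x)=y(z(x))$ with $z(x)=xe^{i\phi\,\mathrm{sgn}(x)}$ to translate the boundary conditions at $0$ on $w$ into continuity (or symmetry, or self-adjointness) conditions, and then to handle each of the three items essentially independently. For \textbf{(i)}, I would first compute the chain rule on each half-axis: for $x>0$, $w'(x)=z'(x)y'(z(x))=e^{i\phi}y'(z(x))$, so $w'(0+)=e^{i\phi}y'(0)$; for $x<0$, $w'(x)=e^{-i\phi}y'(z(x))$, so $w'(0-)=e^{-i\phi}y'(0)$ (here $y'$ is the derivative along the contour, and its one-sided limits at $z=0$ from the two rays are what we call $y'(0+)$ and $y'(0-)$). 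Continuity of $y'$ at $0$ means $y'(0+)=y'(0-)$, and $w(0+)=w(0-)$ already forces $y(0+)=y(0-)$. Substituting: $w'(0+)=\alpha w'(0-)$ becomes $e^{i\phi}y'(0+)=\alpha e^{-i\phi}y'(0-)$, and this is equivalent to $y'(0+)=y'(0-)$ precisely when $\alpha=e^{2i\phi}$ (assuming $y'(0)\neq0$; the degenerate case $y'(0)=0$ is trivially consistent). That settles (i).

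For \textbf{(ii)}, I would apply the definition of $\mathcal{PT}$-symmetry directly to $\mathrm{dom}\,A$. Since $\mathcal{PT}A_0=A_0\mathcal{PT}$ was already established (it was used in the proof of Proposition \ref{Kreinsymmetry}), and $A$ is a two-dimensional extension of $A_0$, the only thing to check is that $\mathcal{PT}$ maps $\mathrm{dom}\,A$ into itself, i.e.\ that the boundary conditions $w(0+)=w(0-)$, $w'(0+)=\alpha w'(0-)$ are invariant under $w\mapsto \mathcal{PT}w=\overline{(\mathcal{P}w)}$. One computes, using the definition of $\mathcal P$ from Section \ref{sec:3}, that $(\mathcal{PT}w)(0+)=\overline{w(0-)}$, $(\mathcal{PT}w)(0-)=\overline{w(0+)}$, and for the derivatives $(\mathcal{PT}w)'(0\pm)=-\overline{w'(0\mp)}$ (the sign from the $-x$ in $\mathcal P$). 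Imposing $w(0+)=w(0-)$ is clearly preserved; imposing $w'(0+)=\alpha w'(0-)$ on $\mathcal{PT}w$ gives $-\overline{w'(0-)}=\alpha(-\overline{w'(0+)})$, i.e.\ $\overline{w'(0-)}=\alpha\overline{w'(0+)}=\alpha\overline{\alpha w'(0-)}=\alpha\overline{\alpha}\,\overline{w'(0-)}=|\alpha|^2\overline{w'(0-)}$, which holds for all admissible $w$ if and only if $|\alpha|^2=1$. One should also confirm $A$ is closed and densely defined, which is routine since it is a finite-codimensional restriction of the closed operator $A_{max}$ containing the closed densely defined $A_0$.

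For \textbf{(iii)}, I would use Proposition \ref{Kreinsymmetry} (so $A_0^+=A_{max}$, $A_0\subset A_0^+$) together with the boundary form computed in Lemma \ref{lemma_part_int}. Since $A_0\subset A\subset A_{max}=A_0^+$, one has $A_0=A_0^{++}\subset A^+\subset A_0^+=A_{max}$, so $A^+$ is again a restriction of $A_{max}$ by some boundary conditions; $A=A^+$ holds iff for all $w\in\mathrm{dom}\,A$ the functional $v\mapsto[A_{max}w,v]-[w,A_{max}v]$ vanishes exactly on $\mathrm{dom}\,A$. By Lemma \ref{lemma_part_int} this functional is
\[
e^{2i\phi}\bigl(w'(0+)\overline{v}(0-)+w(0+)\overline{v}'(0-)\bigr)-e^{-2i\phi}\bigl(w'(0-)\overline{v}(0+)+w(0-)\overline{v}'(0+)\bigr).
\]
Substituting $w(0+)=w(0-)=:w_0$ and $w'(0+)=\alpha w'(0-)$ with $w'(0-)=:w_1$, this becomes
\[
w_1\bigl(e^{2i\phi}\alpha\,\overline{v}(0-)-e^{-2i\phi}\overline{v}(0+)\bigr)+w_0\bigl(e^{2i\phi}\overline{v}'(0-)-e^{-2i\phi}\overline{v}'(0+)\bigr),
\]
and this vanishes for all choices of $(w_0,w_1)$ iff $e^{2i\phi}\alpha\,\overline{v}(0-)=e^{-2i\phi}\overline{v}(0+)$ and $e^{2i\phi}\overline{v}'(0-)=e^{-2i\phi}\overline{v}'(0+)$; i.e.\ $v(0+)=e^{4i\phi}\overline{\alpha}^{-1}\cdots$—more directly, taking conjugates, $\overline{e^{-4i\phi}\alpha}\,v(0-)=v(0+)$ and $v'(0+)=e^{4i\phi}v'(0-)$. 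For $\mathrm{dom}\,A^+$ to coincide with $\mathrm{dom}\,A$ one needs the pair $(v(0+)=v(0-),\,v'(0+)=\alpha v'(0-))$ to be the same linear constraint; matching the coefficients forces $e^{4i\phi}=\alpha$ from the derivative condition and $\overline{\alpha}\,e^{4i\phi}=1$ from the value condition, and these are compatible and together equivalent to $\alpha=e^{-4i\phi}$. I expect the main obstacle to be bookkeeping: getting every sign and every $e^{\pm2i\phi}$ factor right in the boundary-form manipulation, and being careful that "vanishes on exactly $\mathrm{dom}\,A$" (not merely "vanishes on $\mathrm{dom}\,A$," which would only give $A\subset A^+$) genuinely pins down $\alpha$ uniquely; the two-dimensional deficiency count \eqref{PlusPunktPlus} guarantees that a self-adjoint extension of this boundary type exists, so the candidate $\alpha=e^{-4i\phi}$ must be the right one.
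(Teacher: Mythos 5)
Your treatment of (i) and (ii) is essentially the paper's own proof: the same chain-rule computation $w'(0\pm)=e^{\pm i\phi}y'(0\pm)$ for (i), and the same boundary-trace identities $(\mathcal{PT}w)(0\pm)=\overline{w(0\mp)}$, $(\mathcal{PT}w)'(0\pm)=-\overline{w'(0\mp)}$ leading to $|\alpha|^2=1$ for (ii). (For (ii) you should still verify the intertwining $\mathcal{PT}Aw=A\mathcal{PT}w$ on all of $\mathrm{dom}\,A$, not merely cite $\mathcal{PT}A_0=A_0\mathcal{PT}$, since $\mathrm{dom}\,A$ is strictly larger than $\mathrm{dom}\,A_0$; the paper does this by the pointwise identity for the differential expressions, which is immediate, so this is presentational rather than a real gap.) For (iii) your route genuinely differs from the paper's: you compute $\mathrm{dom}\,A^+$ explicitly from the boundary form of Lemma \ref{lemma_part_int} and match it against $\mathrm{dom}\,A$, whereas the paper only checks that the boundary form vanishes on $\mathrm{dom}\,A\times\mathrm{dom}\,A$ (symmetry) and then upgrades to self-adjointness via the dimension count $\mathrm{def}\,(A_0-\lambda)=2=\dim\mathrm{dom}\,A/\mathrm{dom}\,A_0$ from \eqref{PlusPunktPlus}. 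Your version is more explicit and delivers the ``only if'' direction more transparently, provided you also note (as \eqref{d.d} guarantees) that the boundary data $(w(0),w'(0-))$ sweeps out all of $\mathbb{C}^2$ as $w$ ranges over $\mathrm{dom}\,A$, so that the two conditions on $v$ are really forced.

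However, the final matching in your (iii) contains conjugation slips that make your conclusion internally inconsistent. From $e^{2i\phi}\overline{v}'(0-)=e^{-2i\phi}\overline{v}'(0+)$ one gets $\overline{v}'(0+)=e^{4i\phi}\overline{v}'(0-)$ and hence, after conjugating, $v'(0+)=e^{-4i\phi}v'(0-)$ (you wrote $e^{+4i\phi}$); likewise the value condition conjugates to $v(0+)=e^{-4i\phi}\overline{\alpha}\,v(0-)$. Matching against $v'(0+)=\alpha v'(0-)$ and $v(0+)=v(0-)$ then yields $\alpha=e^{-4i\phi}$ from \emph{both} equations, which is the assertion of the theorem. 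As written, your two conditions $e^{4i\phi}=\alpha$ and $\overline{\alpha}\,e^{4i\phi}=1$ both say $\alpha=e^{+4i\phi}$, and the claim that they are ``together equivalent to $\alpha=e^{-4i\phi}$'' does not follow from them. The method is sound; redo the two conjugations and the exponents come out consistent with the statement.
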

\begin{proof}
We obtain
$$
w'(x)=z'(x)y'(z(x))=e^{i \phi sgn (x)}y'(z(x)),
$$
 for $x\neq 0$. Then $y'(0+)=y'(0-)$ is equivalent to
 $$
 e^{-i \phi}w'(0+)=y'(0+)=y'(0-)=e^{i \phi}w'(0-).
 $$
  This shows $(i)$.

With $y \in \mathrm{dom}\, A$,
$$
\mathcal{PT}y(0+)=\overline{y(0-)}=\overline{y(0+)}=\mathcal{PT}y(0-)
$$
and
$$
\alpha (\mathcal{PT}y)'(0-)=-\alpha \overline{y'(0+)}=-\alpha\overline{\alpha y'(0-)}=|\alpha|^2(\mathcal{PT}y)'(0+)
$$
we get $\mathcal{PT}y \in \mathrm{dom} \, A$ if and
 only if $|\alpha |=1$. Moreover, for $x>0$ we have
 $$
 \mathcal{PT}Ay(x)=-e^{2i\phi}\overline{y''(-x)}-e^{-(N+2)i\phi }(ix)^{N+2}\overline{y(-x)}=A\mathcal{PT}y(x)
 $$
A similar calculation holds for $x<0$ and (ii) follows.

It remains to show (iii).
From Lemma \ref{lemma_part_int} follows that $A$ is $[\cdot,\cdot]$-symmetric. Because $\mathrm{def}\,(A_{0}-\lambda)=2$ (see \eqref{PlusPunktPlus}) and $A$ is a two-dimensional extension of $A_0$, $A$ is $\left[\cdot,\cdot\right]$-self-adjoint.
\qed
\end{proof}


\begin{proposition}\label{symmetry}
Let $\lambda \in \sigma_p(A)$ and let $|\alpha|=1$,
which implies $\mathcal{PT}$-symmetry for $A$, see Theorem \ref{thm_A_eig}. If $y$ is the corresponding eigenfunction, then $\mathcal{PT}y$ is also an eigenfunction for $\overline{\lambda}$.
\end{proposition}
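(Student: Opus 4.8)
The plan is to exploit the $\mathcal{PT}$-symmetry of $A$ established in Theorem \ref{thm_A_eig}(ii), namely that $|\alpha|=1$ implies $\mathcal{PT}y\in\mathrm{dom}\,A$ and $\mathcal{PT}Ay=A\mathcal{PT}y$ for all $y\in\mathrm{dom}\,A$, together with the antilinearity of $\mathcal{PT}$. First I would take $y\in\mathrm{dom}\,A$ with $Ay=\lambda y$, $y\neq 0$. Applying $\mathcal{PT}$ to both sides and using the intertwining relation gives $A(\mathcal{PT}y)=\mathcal{PT}(Ay)=\mathcal{PT}(\lambda y)$. The key step is then to observe that $\mathcal{T}$ is complex conjugation and $\mathcal{P}$ is linear, so $\mathcal{PT}$ is antilinear: $\mathcal{PT}(\lambda y)=\overline{\lambda}\,\mathcal{PT}y$. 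Hence $A(\mathcal{PT}y)=\overline{\lambda}\,(\mathcal{PT}y)$, which is exactly the claim provided $\mathcal{PT}y\neq 0$.

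The only genuine point requiring a word is that $\mathcal{PT}y$ is nonzero. This is immediate since $\mathcal{PT}$ is a bijection on $L^2(\mathbb{R})$ (indeed $(\mathcal{PT})^2=\mathrm{id}$: $\mathcal{P}^2=\mathrm{id}$, $\mathcal{T}^2=\mathrm{id}$, and $\mathcal{P}$, $\mathcal{T}$ commute on $L^2(\mathbb{R})$ by their explicit definitions), so $y\neq 0$ forces $\mathcal{PT}y\neq 0$. One should also note that $\mathcal{PT}y$ genuinely lies in $\mathrm{dom}\,A$, but this is precisely what the hypothesis $|\alpha|=1$ delivers via Theorem \ref{thm_A_eig}(ii), so no extra work is needed.

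I do not anticipate any real obstacle here; the statement is a direct and standard consequence of $\mathcal{PT}$-symmetry, and the proof is essentially three lines. The one thing to be careful about, and the only place where a reader might stumble, is tracking the antilinearity correctly: because $\mathcal{T}$ conjugates scalars, the eigenvalue flips to its complex conjugate rather than staying put. I would write the proof compactly, invoking Theorem \ref{thm_A_eig}(ii) for the domain invariance and intertwining, and the antilinearity of $\mathcal{PT}$ for the conjugation of $\lambda$, concluding that $\mathcal{PT}y$ is an eigenfunction of $A$ for the eigenvalue $\overline{\lambda}$.

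\begin{proof}
Let $y\in\mathrm{dom}\,A$, $y\neq 0$, with $Ay=\lambda y$. Since $|\alpha|=1$, Theorem \ref{thm_A_eig}(ii) gives $\mathcal{PT}y\in\mathrm{dom}\,A$ and $A\mathcal{PT}y=\mathcal{PT}Ay$. As $\mathcal{T}$ is complex conjugation and $\mathcal{P}$ is linear, the operator $\mathcal{PT}$ is antilinear, so
\begin{align*}
A\mathcal{PT}y=\mathcal{PT}(Ay)=\mathcal{PT}(\lambda y)=\overline{\lambda}\,\mathcal{PT}y.
\end{align*}
Finally, $\mathcal{PT}y\neq 0$ because $(\mathcal{PT})^2=\mathrm{id}$ on $L^2(\mathbb{R})$, hence $\mathcal{PT}$ is injective. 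Therefore $\mathcal{PT}y$ is an eigenfunction of $A$ for the eigenvalue $\overline{\lambda}$.
\qed
\end{proof}
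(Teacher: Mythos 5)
Your proof is correct and follows exactly the paper's argument: apply the intertwining relation $A\mathcal{PT}y=\mathcal{PT}Ay$ from Theorem \ref{thm_A_eig}(ii) and use the antilinearity of $\mathcal{PT}$ to turn $\lambda$ into $\overline{\lambda}$. The only difference is that you additionally record why $\mathcal{PT}y\neq 0$, a point the paper leaves implicit.
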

\begin{proof}
From $y \in \mathrm{dom}\, A$ it follows $\mathcal{PT}y \in \mathrm{dom}\, A$ and $A\mathcal{PT}y=\mathcal{PT}Ay=\mathcal{PT} \lambda y=\overline{\lambda}\mathcal{PT}y$.\qed
\end{proof}

%

The following theorem is our main result.
\begin{theorem}\label{main}
Let  $\alpha=e^{-4i\phi}$.
We assume  $\phi \ne 0$ and we assume that
 one of the following two conditions is satisfied.
 \begin{itemize}
\item If $\phi > 0$, then there exists a natural number $k$, $k\geq 0$, with
$$
\frac{2k\pi}{N+2}-\frac{\pi}{2}<\phi<\frac{(2k+1)\pi}{N+2}-\frac{\pi}{2}.
$$
\item If $\phi< 0$, then there exists $k\in \mathbb{Z}$, $k\leq 0$, with
$$
\frac{(2k-1)\pi}{N+2}-\frac{\pi}{2}<\phi<\frac{2k\pi}{N+2}-\frac{\pi}{2}.
$$
\end{itemize}
Then  $A$ is $[\cdot,\cdot]$-self-adjoint and $\mathcal{PT}$-symmetric with
\begin{align*}
\rho(A) \neq \emptyset, \quad \mbox{and} \quad
 \sigma(A)=\sigma_p(A).
\end{align*}
The spectrum of $A$ is symmetric to the real line, it consists only of discrete eigenvalues of finite algebraic multiplicity with no finite accumulation point and $\mathrm{dim}\, \mathrm{ker}\,(A-\lambda)=1$ for $\lambda \in \sigma_p(A)$.
\end{theorem}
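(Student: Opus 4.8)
## Proof strategy

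The plan is to deduce all assertions from the structural comparison between $A$ and the direct sum $A_-\oplus A_+$ of the auxiliary half-axis operators introduced after Proposition \ref{symmetry}, together with the Krein-space self-adjointness established in Theorem \ref{thm_A_eig}(iii). First I would observe that the two-sided conditions on $\phi$ in the hypothesis are precisely an explicit reformulation of the requirement $\phi \ne -\frac{N+2}{2N+8}\pi + \frac{2k}{N+4}\pi$, i.e.\ that $\tau_{\pm}$ lie in the limit-point case (Theorems \ref{theorem} and \ref{prop}), so that the entire machinery of Sections \ref{sec:2}--\ref{sec:4} is available; in particular $\alpha = e^{-4i\phi}$ is admissible and, by Theorem \ref{thm_A_eig}(iii), $A$ is $[\cdot,\cdot]$-self-adjoint, while $|\alpha|=1$ gives $\mathcal{PT}$-symmetry by Theorem \ref{thm_A_eig}(ii).

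Next I would recall from \cite{BCEP} that the operators $A_{\pm}$, being $\mathcal{T}$-self-adjoint half-axis operators in the limit-point case, have empty essential spectrum and spectra consisting only of isolated eigenvalues of finite algebraic multiplicity; hence the same holds for $A_- \oplus A_+$, and its resolvent set is non-empty. The key step is then the observation that $\mathrm{dom}\, A$ and $\mathrm{dom}\,(A_-\oplus A_+)$ both contain $\mathrm{dom}\, A_0$ and are each two-dimensional extensions of it (for $A$ this follows from \eqref{PlusPunktPlus} and the count of boundary conditions $w(0+)=w(0-)$, $w'(0+)=\alpha w'(0-)$; for $A_-\oplus A_+$ from the two conditions $w(0+)=0$, $w(0-)=0$). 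Consequently $A$ and $A_-\oplus A_+$ are mutually finite-rank perturbations in the resolvent sense: whenever $\mu$ lies in $\rho(A_-\oplus A_+)$ and is not an eigenvalue of $A$, the difference $(A-\mu)^{-1} - (A_-\oplus A_+ - \mu)^{-1}$ has rank at most $2$. By the standard stability of the essential spectrum under finite-rank (indeed relatively compact) perturbations — e.g.\ \cite[IX]{EE} — we get $\sigma_{ess}(A)=\sigma_{ess}(A_-\oplus A_+)=\emptyset$, so every point of $\sigma(A)$ is an isolated eigenvalue of finite algebraic multiplicity with no finite accumulation point, and $\sigma(A)=\sigma_p(A)$.

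It remains to verify $\rho(A)\neq\emptyset$ and the two refinements. For non-emptiness of the resolvent set I would argue that, since $A$ is a finite-dimensional extension of the minimal operator $A_0$ with $\Pi(A_0)\neq\emptyset$ by \eqref{13a}, and since $A$ is $[\cdot,\cdot]$-self-adjoint in the Krein space, $\sigma(A)$ cannot be all of $\mathbb{C}$: explicitly, for $\lambda\in\Pi(A_0)$ outside a finite exceptional set one constructs the resolvent of $A$ from that of $A_{max}$ by the Glazman-Krein-Naimark-type boundary-triple formula, so $\rho(A)$ is non-empty (this is where the hypothesis $\phi\ne 0$, ensuring $Q_+\cup Q_-\ne\mathbb{C}$ via \eqref{13b}, enters). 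The symmetry of $\sigma(A)$ with respect to the real axis is immediate from Proposition \ref{symmetry}: if $\lambda\in\sigma_p(A)$ with eigenfunction $y$, then $\mathcal{PT}y$ is an eigenfunction for $\overline{\lambda}$. Finally, $\dim\ker(A-\lambda)=1$ follows because any eigenfunction of $A$ restricts on each half-axis to a multiple of the unique $L^2$-solution of $(\tau_{\pm}-\lambda)w=0$ (limit-point case), so the eigenspace is at most two-dimensional before imposing the coupling in $0$, and the two conditions $w(0+)=w(0-)$, $w'(0+)=\alpha w'(0-)$ cut this down to dimension $1$ unless both one-sided solutions already vanish to second order at $0$, which by uniqueness would force $w\equiv 0$. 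I expect the main obstacle to be the careful bookkeeping in the resolvent comparison — making the rank-$2$ perturbation statement precise at points that are eigenvalues of one operator but not the other, and checking that the exceptional set where the GKN formula degenerates is genuinely finite — rather than any of the limit-point asymptotics, which are already in hand from Theorem \ref{theorem}.
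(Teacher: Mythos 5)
There is a genuine gap, and it sits at the heart of the theorem: the proof that $\rho(A)\neq\emptyset$. First, you misread the hypothesis on $\phi$. It is \emph{not} a reformulation of the limit-point condition $\phi\neq-\frac{N+2}{2N+8}\pi+\frac{2k}{N+4}\pi$ (which excludes only a discrete set of angles); the intervals in Theorem \ref{main} have endpoints of the form $\frac{k\pi}{N+2}-\frac{\pi}{2}$ (denominator $N+2$, not $N+4$) and exclude whole ranges of $\phi$. Their actual role is to force $\sin\bigl((N+2)(\phi+\tfrac{\pi}{2})\bigr)>0$ for $\phi>0$ (resp.\ $<0$ for $\phi<0$), so that the ray $R_+=\{-(ix)^{N+2}e^{(N+2)i\phi}:x\geq 0\}$ and the ray $e^{-2i\phi}\mathbb{R}_+$ lie in the same closed half-plane; hence $Q_+$ lies in one half-plane, $Q_-=Q_+^*$ in the other, and $Q_+\cap Q_-=\{0\}$. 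This disjointness is what the paper uses to produce a point $\lambda\in\sigma(A_+)\setminus\sigma(A_-)$, at which the Weyl quotient $v'_{\lambda,+}(0)/v_{\lambda,+}(0)$ has an isolated singularity while the right-hand side of the eigenvalue criterion \eqref{weylquotient} stays holomorphic; this yields an open set $O$ (and its conjugate $O^*$) free of eigenvalues of $A$. Your proposal never uses the hypothesis on $\phi$ in any essential way, which is a sign that the decisive step is missing.

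Second, your justification for $\sigma(A)\neq\mathbb{C}$ does not hold up. Self-adjointness in a Krein space does \emph{not} prevent the spectrum from being all of $\mathbb{C}$; that is exactly the pathology to be excluded here, and the paper does it by hand: assuming $\sigma(A)=\mathbb{C}$, the Krein-space self-adjointness turns residual spectrum into point spectrum at conjugate points, forcing $O\cup O^*\subset\sigma_c(A)$, while for $\lambda\in(O\cup O^*)\cap\rho(A_+)\cap\rho(A_-)\subset\Pi(A_0)$ the identity $\mathrm{def}(A_0-\lambda)=2$ together with $\dim\mathrm{dom}\,A/\mathrm{dom}\,A_0=2$ shows that $A-\lambda$ has closed range --- a contradiction. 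Your appeal to a Glazman--Krein--Naimark-type resolvent formula ``outside a finite exceptional set'' begs the question: the exceptional set is the zero set of the holomorphic function implicit in \eqref{weylquotient}, and a priori it could be an entire connected component of $\rho(A_+)\cap\rho(A_-)$; ruling that out is precisely what the $Q_+\cap Q_-=\{0\}$ argument accomplishes. The remaining parts of your proposal (empty essential spectrum via a rank-$2$ resolvent difference with $A_-\oplus A_+$, spectral symmetry via Proposition \ref{symmetry}, and $\dim\ker(A-\lambda)=1$ from the limit-point case) do match the paper, but they all presuppose the non-empty resolvent set that you have not established.
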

\begin{proof}
The self-adjointness and the $\mathcal{PT}$-symmetry follows
from Theorem \ref{thm_A_eig}. In order to show that the
resolvent set of $A$ is non-empty, we introduce two auxillary
operators $A_\pm$ via
\begin{align*}
\mathrm{dom}\, A_{\pm}
:=\left\{ w\in \mathrm{dom}\, A_{\max\pm}(\tau_\pm)  :  w(0)=0  \right\}, \quad A_{\pm}w(x):=\tau_\pm w(x)
\end{align*}
From \cite[Theorems 4.4 and 4.5]{BCEP} we know, that the spectrum consists at most of isolated eigenvalues with finite algebraic multiplicity and it is located in the set $Q_\pm$,
\begin{align}\label{sigmaPM}
\sigma(A_\pm)=\sigma_p(A_\pm)\subset Q_\pm.
\end{align}
In particular, the essential spectrum is empty.

The assumption on $\phi$ imply that for
 $\phi > 0$ we obtain
  $\sin((N+2)\phi+(N+2)\frac{\pi}{2})>0$ and, hence,
  $\mathrm{Im}\, (-(ix)^{N+2}e^{(N+2)i\phi})<0$.
As $\phi > 0$ is in the interval $(0,\pi/2)$ (see
page \pageref{SEE}), we have $\mathrm{Im}\, e^{-2i\phi}
< 0$ and therefore $Q_+$ is contained in the lower half plane.

If $\phi<0$ we have $\mathrm{Im}\, (-(ix)^{N+2}e^{(N+2)i\phi})>0$ and  $\mathrm{Im}\, e^{-2i\phi}
> 0$ and $Q_+$ is contained in the upper half plane. As $Q_-=Q_+^*$, we obtain
 $$
 Q_+ \cap Q_-=\{0\}.
 $$
\emph{\bf Claim.} For $\lambda \not \in \sigma_p(A_+) \cup \sigma_p(A_-)$  we have $v_{\lambda,+}(0) \neq 0$ and $v_{\lambda,-}(0) \neq 0$, where $v_{\lambda,+}$ and $v_{\lambda,-}$ are the non-zero $L^2$-solutions of $(\tau_{\pm}-\lambda)y=0$. In this case
\begin{align}\label{weylquotient}
\lambda \in \sigma_p(A) \Leftrightarrow
\frac{v'_{\lambda,+}(0)}{v_{\lambda,+}(0)}=
e^{-4i\phi}\frac{v'_{\lambda,-}(0)}{v_{\lambda,-}(0)}.
\end{align}
\begin{proof}\emph{of the claim}.
Suppose that the right hand side of \eqref{weylquotient} holds. Set
\begin{align*}
v(x):=\left\{
\begin{array}{lc}
v_{\lambda,+}(x),& x \geq 0\\[1ex]
\frac{v_{\lambda,+}(0)}{v_{\lambda,-}(0)}v_{\lambda,-}(x), \quad &
x \leq 0
\end{array}\right.
\end{align*}
then $v(0+)=v(0-)$ and
$$
v'(0-)=\frac{v_{\lambda,+}(0)}{v_{\lambda,-}(0)}
v'_{\lambda,-}(0)=e^{4i\phi}v'_{\lambda,+}(0)=
e^{4i\phi}v'(0+).
$$
So we have $v \in \mathrm{dom}\, A$ and $\lambda \in \sigma_p(A)$.

To prove the converse choose an eigenfunction $v \in \mathrm{dom}\,  A$ corresponding to the eigenvalue $\lambda$.
Due to the limit point case
 there exist constants with $v\vert_{\mathbb{R}_{\pm}}=\alpha_{\pm} v_{\lambda,\pm}$. Hence $v(0)=\alpha_+
 v_{\lambda,+}(0)=\alpha_- v_{\lambda,-}(0)$ and
 $\alpha_+ v'_{\lambda,+}(0)=v'(0+)=e^{-4i\phi}v'(0-)=e^{-4i\phi}\alpha_- v'_{\lambda,-}(0)$ and we obtain
\begin{align*}
\frac{v'_{\lambda,+}(0)}{v_{\lambda,+}(0)}=
\frac{\alpha_+v'_{\lambda,+}(0)}{\alpha_+v_{\lambda,+}(0)}
=e^{-4i\phi}\frac{\alpha_-v'_{\lambda,-}(0)}{\alpha_-v_{\lambda,-}(0)}=e^{-4i\phi}\frac{v'_{\lambda,-}(0)}{v_{\lambda,-}(0)}
\end{align*}
\end{proof}
and the claim is proved.

We continue with the proof of Theorem \ref{main}. 
We have $Q_+ \cap Q_-=\{0\}$ and, hence, by
\eqref{sigmaPM} we find $\lambda \in \sigma(A_+)\setminus
\sigma(A_-)$.
Then we have for $v_{\lambda,+}$, $v_{\lambda,-}$ as in the
 claim from above  that $v_{\lambda,+}(0)=0$ and $v_{\lambda,-}(0) \neq 0$. According to the uniqueness theorem
 $v'_{\lambda,+}(0) \neq 0$ holds. Moreover $\lambda$ is an
  isolated singularity of the function $\lambda \mapsto \frac{v'_{\lambda,+}(0)}{v_{\lambda,+}(0)}$. Recall that $v_{\lambda,+}$ depends holomorphic on $\lambda$, cf.\
\cite[Theorem 3.4.2.]{H}. But the right hand side of
\eqref{weylquotient} has no singularity at $\lambda$.
Hence there exists an open set $O$ with $O \cap \sigma_p(A)=
\emptyset$ due to the claim above. It is easy to see
that $\overline \lambda$ is an eigenvalue of $A_-$ but,
due to the fact that the opening of $Q_+$ is less than $\pi$,
cf.\ Theorem \ref{Hilfsstory}, $\overline \lambda$
is no eigenvalue of $A_+$. We obtain with the same arguments
from above $O^*:=\left\{\overline{\lambda}:\lambda \in O \right\}$ with $O^* \cap \sigma_p(A)= \emptyset$, so $(O\cup O^*)\cap \sigma_p(A)= \emptyset$.

Now assume that $\rho(A)=\emptyset$, that is, $\sigma(A)=\mathbb{C}$. If $\lambda$ is a point from the residual spectrum of $A$ (i.e., the operator $A-\lambda$ has zero kernel but
a non-dense range), then \cite[VI Theorem 6.1]{Bog}
implies  $\overline{\lambda} \in \sigma_p(A)$.
Therefore,
\begin{equation}\label{contradiction}
O \cup O^* \subset \sigma_c(A),
\end{equation}
where $ \sigma_c(A)$ denote the set of all $\lambda \in \mathbb C$ such that the operator $A-\lambda$ has zero kernel and
a dense but non-closed range.
We choose now $\lambda \in (O\cup O^*)\cap \rho(A_+)
\cap \rho(A_-)$. Then we have $\lambda \in \rho(A_-\oplus A_+)$. As $A_-\oplus A_+ \subset A_{max}$,  we see $\mathrm{ran}\,(A_{max}-\lambda)=L^2(\mathbb{R})$.
 As the minimal operator $A_0$ is the direct sum of two closed operators
 (cf.\ Theorem \ref{Hilfsstory}) it is a closed operator.
With $\rho(A_{\pm})\subset\Pi(A_{\pm})\subset \Pi(A_{0\pm})$ we get $\lambda \in \Pi(A_0)$ and from \eqref{PlusPunktPlus} we obtain
$$
 \mathrm{def}\,(A_{0}-\lambda)=2,
 $$
hence the operator $A_0-\lambda$ has a closed range. As $A_0\subset A$ and $\mathrm{dim}\,\mathrm{dom}\,A / \mathrm{dom}\,A_{0}=2$ also the range of $A-\lambda$ is closed,
a contradiction to \eqref{contradiction} and we have $\rho(A) \neq \emptyset$. Moreover we have for $\lambda \in \rho(A)\cap \rho(A_D)$
$$
\mathrm{rank}\,((A-\lambda)^{-1}-(A_D-\lambda)^{-1})\leq 2
$$
and thus the essential spectra coincide, cf.\ \cite[IX Theorem 2.4]{EE}.

According to limit-point/limit-circle classification we have
 for $\lambda \in \sigma_p(A)$
$$
\mathrm{dim}\, \mathrm{ker}\,(A-\lambda)=1.
$$
The symmetry of the spectrum follows from
Proposition \ref{symmetry}. \qed
\end{proof}

\section{Conclusion}
Summing up, our main results include
\begin{itemize}
\item[1.] A limit-point/limit-circle classification of \eqref{SLEE} and \eqref{SLEE1}, plus a mathematical meaning of Stokes wedges and Stokes lines, which is the limit-point/limit-circle classification.
\item[2.] The operator $A$,
which corresponds to the full axis problem \eqref{HAMEQ}
 with a coupling condition in zero, is self-adjoint in the inner product $[\cdot,\cdot]$ and it is $\mathcal{PT}$-symmetric.
\item[3.] The spectrum of $A$ consists at most of isolated eigenvalues with finite algebraic multiplicity, the essential spectrum is empty and $A$ has a non-empty resolvent set.
\end{itemize}

\end{document}